\providecommand{\U}[1]{\protect\rule{.1in}{.1in}}
\newtheorem{theorem}{Theorem}
\newtheorem{corollary}[theorem]{Corollary}
\newtheorem{lemma}[theorem]{Lemma}
\newtheorem{proposition}[theorem]{Proposition}
\newenvironment{proof}[1][Proof]{\noindent\textbf{#1.} }{\ \rule{0.5em}{0.5em}}
\newcommand{\Tr}{\operatorname{Tr}}
\begin{document}

\title{Upper Bounds on the Distillable Randomness of Bipartite Quantum States}

\author{%
  \IEEEauthorblockN{Ludovico Lami\IEEEauthorrefmark{1}\IEEEauthorrefmark{2}, Bartosz Regula\IEEEauthorrefmark{3}, Xin Wang\IEEEauthorrefmark{4}, and Mark M.~Wilde\IEEEauthorrefmark{5}}
  \IEEEauthorblockA{
  \IEEEauthorrefmark{1} Institute for Theoretical Physics, Korteweg--de Vries Institute for Mathematics, and QuSoft, \\ University of Amsterdam, the Netherlands. Email: ludovico.lami@gmail.com\\
  \IEEEauthorrefmark{2} Institute for Theoretical Physics and IQST, University of Ulm, Albert-Einstein-Allee 11, D-89069 Ulm, Germany}
  \IEEEauthorblockA{\IEEEauthorrefmark{3}Department of Physics, Graduate School of Science, The University of Tokyo,\\Bunkyo-ku, Tokyo 113-0033, Japan.
                Email: bartosz.regula@gmail.com}
  \IEEEauthorblockA{\IEEEauthorrefmark{4}Institute for Quantum Computing, Baidu Research, Beijing
100193, China.
                    Email: wangxinfelix@gmail.com}
  \IEEEauthorblockA{\IEEEauthorrefmark{5}School of Electrical and Computer
  Engineering, Cornell University, \\ Ithaca, New York 14850, USA.
                    Email: wilde@cornell.edu \vspace{-4ex}}

}

\maketitle

\begin{abstract}
The distillable randomness of a bipartite quantum state is an
information-theoretic quantity equal to the largest net rate at which shared
randomness can be distilled from the state by means of local operations and
classical communication. This quantity has been widely used as a measure of
classical correlations, and one version of it is equal to the regularized Holevo information of the ensemble that
results from measuring one share of the state. However, due to the
regularization, the distillable randomness is difficult to compute in general. To address this problem, we define measures
of classical correlations and prove a number of their properties, most
importantly that they serve as upper bounds on the distillable randomness of an
arbitrary bipartite state. We then further bound these measures from above by some that are efficiently computable by means of
semi-definite programming, we evaluate one of them for the example of an isotropic state, and we
remark on the relation to quantities previously proposed in the literature.
\end{abstract}



\section{Introduction}

The distillable randomness of a bipartite quantum state is equal to the
largest net rate at which uniformly random, perfectly correlated bits can be
distilled from the state by means of local operations and one-way classical
communication (1W-LOCC)~\cite{DW04}. That is, in this scenario, the net rate
is the rate at which shared randomness is distilled minus the rate at which
classical bits are communicated to accomplish the task. It is important to subtract the classical communication rate, as failure to do so would
trivialize the task, allowing for an infinite number of random bits to be shared. This task fundamentally has its roots in classical information theory
\cite{AC93p1,AC98p2}. Here we also extend the task to allow for general
local operations and classical communication (LOCC).

The distillable randomness of a bipartite state is considered a fundamental
measure of classical correlations contained in a state. There has been a large
literature on this topic, starting with~\cite{HV01,OZ01} and reviewed in~\cite{MBCPV12}, due to the wide
interest in understanding correlations present in quantum states. However,
\cite{DW04} is the one of the few papers to consider understanding these
correlations in an information-theoretic manner (see also
\cite{OHHH02,D05,DHW05RI,KD07,MPZ18,MLA19,CNB22} for other perspectives and related works).

Ref.~\cite{DW04} provided a formal solution for the aforementioned
information-theoretic task. However, like many such formal solutions in
quantum Shannon theory, it does not provide a computationally
efficient procedure for quantifying correlations because it is expressed as a
multi-letter or regularized formula. Our goal here is to fill this void by
providing  upper bounds on the distillable
randomness that are efficiently computable by semi-definite programming. We also justify how some of the newly
proposed measures are themselves measures of classical correlations contained in a
bipartite state.

Although our 
results here can be viewed as a static counterpart to the results in the
dynamical setting from~\cite{WXD18,WFT19,Fang2019a,DKQSWW22}, their derivation requires some new insights, as we explain in detail below. 
Ref.~\cite{DKQSWW22} proved that generalizations of the dynamical measures from~\cite{WXD18,WFT19,Fang2019a} give 
upper bounds on the rate at which classical messages be communicated from one
party to another, whenever they have access to a bipartite channel as a
communication resource. A special case of this setting is when a sender is
communicating classical messages to a receiver, with the assistance of a
classical feedback channel. In contrast, as described earlier, our static
setting here involves two parties distilling shared randomness from a bipartite
state, by means of 1W-LOCC or general LOCC.

We comment here on the novel contributions of our paper when compared to 
earlier works such as~\cite{WXD18,WFT19,Fang2019a,DKQSWW22}. The main measures of classical correlations proposed here, denoted by $\gamma$, $C_\gamma$, and $\boldsymbol{\Gamma}$ in Section~\ref{sec:cl-corr-meas}, are inspired by the measures that have appeared earlier in~\cite{WXD18,WFT19,Fang2019a,DKQSWW22}, denoted there by $\beta$, $C_\beta$, and $\boldsymbol{\Upsilon}$. However, the measures proposed here are symmetric with respect to a swap of the $A$ and $B$ systems. This property is critical for bounding the LOCC-assisted distillable randomness from above, as accomplished in Theorems~\ref{thm:one-shot-bnd} and~\ref{thm:asymp-str-conv-bnd}. Furthermore, we prove here that these measures obey the classical communication bound in Proposition~\ref{prop:classical-comm-bnd}, also known as a ``non-lockability'' property in prior work on correlation measures~\cite{KRBM07,HSKSY21}. Finally, Lemma~\ref{lem:fid-bnd-free-ops} and Proposition~\ref{prop:key-dim-bnd-converse} are critical bounds that allow us to compare the actual state at the end of a randomness distillation protocol with an ideal state, and they ultimately lead to a strong converse upper bound for the distillable randomness. For the task of randomness distillation, the target (ideal) state is a mixed state, which is in distinction to 
most prior works on quantum resource theories, in which the target state is necessarily pure~\cite{liu_2019,FL20,RBTL20}. So the approach we have taken here goes beyond methods previously applied in such contexts and could potentially be useful more generally in research on quantum resource theories.
We note here that similar methods were employed in~\cite{LWD16,Wang2019states}, but Lemma~\ref{lem:fid-bnd-free-ops} and Proposition~\ref{prop:key-dim-bnd-converse} provide a comparison between the ideal mixed state of interest and a convex set of positive semi-definite operators, rather than a comparison between an ideal mixed state and just one other mixed state.

The rest of the paper is organized as follows. In Section~\ref{sec:notation} we introduce some notation and key concepts. 
In Section~\ref{sec:cl-corr-meas} we 
construct and study certain
classical correlation measures, while in 
Section~\ref{sec:dist-rand-state} we define the distillable randomness of a bipartite state. Continuing, in Section~\ref{sec:upper-bnds} we prove that the proposed measures are upper bounds on the distillable randomness, and in Section~\ref{sec:SDP-relax} we consider semi-definite restrictions of the bounds. Finally, in Section~\ref{sec:conclusion} we conclude with a brief summary and some open directions for future research.

\section{Notation} \label{sec:notation}

Here we establish some notation and concepts that we use throughout the paper.
We point the reader to the textbook~\cite{W17} and the manuscript~\cite{KW20book} for
further details. We defer to Appendix~\ref{app:notation} the introduction of the notation required for the proofs presented in the rest of the appendix. 
Let $\overline{\Phi}_{AB}^{d}$ denote the maximally classically correlated state of rank $d$, given by
\begin{equation}
\overline{\Phi}_{AB}^{d}\coloneqq\frac{1}{d}\sum_{m=0}^{d-1}|m\rangle\!\langle
m|_{A}\otimes|m\rangle\!\langle m|_{B}. \label{eq:max-class-corr-state}%
\end{equation}
We denote the transpose map acting on the  system$~A$ by
$
T_{A}(\cdot)\coloneqq\sum_{i,j=0}^{d-1}|i\rangle\!\langle j|_{A}%
(\cdot)|i\rangle\!\langle j|_{A}$.

Let us define a generalized divergence $\boldsymbol{D}$ of a state $\rho$ and
a positive semi-definite operator $\sigma$ as a function that obeys:
\begin{enumerate}
\item data processing: $\boldsymbol{D}(\rho\Vert\sigma)\geq\boldsymbol{D}(\mathcal{N}(\rho)\Vert\mathcal{N}(\sigma))$, where $\mathcal{N}$ is an arbitrary quantum channel;

\item the scaling relation: $\boldsymbol{D}(\rho\Vert c\sigma)=\boldsymbol{D}(\rho
\Vert\sigma)-\log_{2}c$, for all $c > 0$; and

\item 
the zero-value property: $\boldsymbol{D}(\rho\Vert\rho)=0$ for every state $\rho$.
\end{enumerate}

We note that the scaling and zero-value properties together imply non-negativity: $\forall c\in(0,1], \ \boldsymbol{D}(1\Vert c)\geq0 $. Indeed, considering that the number 1 is a state of a trivial system, we have that $\boldsymbol{D}(1\Vert c) = \boldsymbol{D}(1\Vert 1) - \log_2 c = - \log_2 c \geq 0$.

\section{Classical Correlation Measures}

\label{sec:cl-corr-meas}

We now define some classical correlation measures that are used later on, in
the application of bounding the distillable randomness from above. For a
positive semi-definite, bipartite operator $\sigma_{AB}$, let us define%
\begin{equation}
\gamma(\sigma_{AB})\coloneqq\inf_{K_{A},L_{B},V_{AB}\in\text{Herm}}\left\{
\begin{array}
[c]{c}%
\operatorname{Tr}[K_{A}\otimes L_{B}]:\\
T_{B}(V_{AB}\pm\sigma_{AB})\geq0,\\
K_{A}\otimes L_{B}\pm V_{AB}\geq0
\end{array}
\right\}  . \label{eq:basic-gamma-measure}%
\end{equation}
We also denote this quantity by $\gamma(A;B)_{\sigma}$. We also define%
\begin{align}
C_{\gamma}(\sigma_{AB})  &  \coloneqq C_{\gamma}(A;B)_{\sigma}\coloneqq\log
_{2}\gamma(\sigma_{AB}).
\end{align}
For a bipartite state $\rho_{AB}$, we then define
\begin{align}
\boldsymbol{\Gamma}(\rho_{AB})  &  \coloneqq\inf_{\substack{\sigma_{AB}%
\geq0:\\\gamma(\sigma_{AB})\leq1}}\boldsymbol{D}(\rho_{AB}\Vert\sigma
_{AB}),\label{eq:Gamma-quantity-gen-div-def}
\end{align}
which follows an approach for defining resource measures proposed originally by~\cite{R99,R01,AdMVW02} in the context of entanglement theory. 
Also, as a consequence of the scaling and zero-value properties listed above,
we conclude that%
\begin{equation}
\boldsymbol{\Gamma}(\rho_{AB})\leq\boldsymbol{D}(\rho_{AB}\Vert\rho
_{AB}/\gamma(\rho_{AB}))=C_{\gamma}(\rho_{AB}).
\label{eq:Big-gamma-C-gamma-relation}%
\end{equation}

Our goal is to show that $\boldsymbol{\Gamma}(\rho_{AB})$ is an upper bound on
the distillable randomness of a bipartite state $\rho_{AB}$. We do so in
Section~\ref{sec:upper-bnds} after proving various properties of
$\gamma$, $C_{\gamma}$, and $\boldsymbol{\Gamma}$, in
Section~\ref{sec:props-cl-corr-measures}.

\subsection{Properties of Classical Correlation Measures}

\label{sec:props-cl-corr-measures}In this section, we establish a number of
properties of the quantities $\gamma$, $C_{\gamma}$, and $\boldsymbol{\Gamma}%
$, proposed in Section~\ref{sec:cl-corr-meas}. In particular, we prove that
these measures satisfy several properties expected for a measure of classical
correlations of a bipartite state, including

\begin{enumerate}

\item symmetry under exchange of $A$ and $B$ (Proposition~\ref{prop:symm-exch-A-B}), 

\item data processing under local channels
(Proposition~\ref{prop:DP-local-chs}),

\item local isometric invariance (Corollary~\ref{cor:local-iso-inv}),

\item non-negativity, 
faithfulness for product states
(Proposition~\ref{prop:non-neg-state-meas}),

\item dimension bound (Proposition~\ref{prop:dim-bound-state-measure}),

\item scale invariance (Proposition~\ref{prop:scale-invariance}),

\item classical communication bound (Proposition~\ref{prop:classical-comm-bnd}),

\item subadditivity (Proposition~\ref{prop:subadd-state-meas}), and 

\item continuity near maximally classically correlated states (Proposition~\ref{prop:key-dim-bnd-converse}).
\end{enumerate}

\begin{proposition}
[Exchange Symmetry]\label{prop:symm-exch-A-B} Let
$\sigma_{AB}$ be a bipartite positive semi-definite operator, and let $\rho_{AB}$ be a state. Then
\begin{align}
C_{\gamma}(A;B)_{\sigma}  & = C_{\gamma}(B;A)_{\sigma}, \\
\boldsymbol{\Gamma}(A;B)_{\rho}  & = \boldsymbol{\Gamma}(B;A)_{\rho}.
\end{align}
\end{proposition}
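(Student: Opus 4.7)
The plan is to reduce both symmetry statements to a single elementary fact about the partial transpose on Hermitian bipartite operators: namely, that for any Hermitian $X_{AB}$, one has $T_B(X_{AB}) \geq 0$ if and only if $T_A(X_{AB}) \geq 0$. This holds because $T_A(X_{AB})$ and $T_B(X_{AB})$ differ by a full transpose $(T_A \otimes T_B)$, and for Hermitian operators the full transpose equals entry-wise complex conjugation, which preserves the spectrum and in particular positivity.

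Given this observation, I would argue that the two feasible sets in the SDP defining $\gamma(A;B)_\sigma$ and $\gamma(B;A)_\sigma$ coincide. Indeed, upon relabeling variables, $\gamma(B;A)_\sigma$ corresponds to the same infimum over Hermitian $K_A,L_B,V_{AB}$ but with the constraint $T_B(V_{AB} \pm \sigma_{AB}) \geq 0$ replaced by $T_A(V_{AB} \pm \sigma_{AB}) \geq 0$. Since $V_{AB} \pm \sigma_{AB}$ is Hermitian, the preliminary observation shows these constraints are equivalent. The remaining constraint $K_A \otimes L_B \pm V_{AB} \geq 0$ makes no reference to which partial transpose is used, and the objective $\Tr[K_A \otimes L_B] = \Tr[K_A]\Tr[L_B]$ is manifestly symmetric in $A$ and $B$. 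Hence $\gamma(A;B)_\sigma = \gamma(B;A)_\sigma$, and taking $\log_2$ yields the claim for $C_\gamma$.

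For the statement about $\boldsymbol{\Gamma}$, I would simply invoke the definition in~\eqref{eq:Gamma-quantity-gen-div-def}. The generalized divergence $\boldsymbol{D}(\rho_{AB} \Vert \sigma_{AB})$ depends only on the pair of operators and not on any intrinsic labeling of subsystems. Furthermore, by the symmetry of $\gamma$ just established, the feasibility set $\{\sigma_{AB} \geq 0 : \gamma(\sigma_{AB}) \leq 1\}$ is identical regardless of whether we read the systems in the order $(A,B)$ or $(B,A)$. The infimum is therefore the same in both orderings, giving $\boldsymbol{\Gamma}(A;B)_\rho = \boldsymbol{\Gamma}(B;A)_\rho$. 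The only delicate point in the whole argument is the equivalence of partial transposes on Hermitian operators, which is a standard fact; beyond that, both claims follow by inspection of the definitions.
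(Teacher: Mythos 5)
Your proposal is correct and follows essentially the same route as the paper, which likewise reduces both claims to the fact that for a Hermitian $W_{AB}$ one has $T_B(W_{AB})\geq 0$ if and only if $T_A(W_{AB})\geq 0$, and then inspects the definitions. Your justification of that fact via the full transpose preserving positivity of Hermitian operators is the standard one and is sound.
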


\begin{proof}
This follows by inspecting the definitions and using the fact that, for a Hermitian operator $W_{AB}$, the inequality $T_B(W_{AB}) \geq 0 $ holds if and only if $T_A(W_{AB}) \geq 0 $ does.
\end{proof}

\begin{proposition}
[Data-Processing under Local Channels]\label{prop:DP-local-chs} Let
$\sigma_{AB}$ be a bipartite positive semi-definite operator, and let
$\mathcal{N}_{A\rightarrow A^{\prime}}$ and $\mathcal{M}_{B\rightarrow
B^{\prime}}$ be quantum channels. Then%
\begin{align}
C_{\gamma}(A;B)_{\sigma}  &  \geq C_{\gamma}(A^{\prime};B^{\prime})_{\omega
},\label{eq:C-gamma-DP-local-chs}\\
\boldsymbol{\Gamma}(A;B)_{\sigma}  &  \geq\boldsymbol{\Gamma}(A^{\prime
};B^{\prime})_{\omega}, \label{eq:bold-gamma-DP-local-chs}%
\end{align}
where $\omega_{A^{\prime}B^{\prime}}\coloneqq(\mathcal{N}_{A\rightarrow
A^{\prime}}\otimes\mathcal{M}_{B\rightarrow B^{\prime}})(\rho_{AB})$.
\end{proposition}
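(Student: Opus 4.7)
The plan is to prove data processing for $\gamma$ first, and then deduce data processing for $C_\gamma$ (trivially, since $\log_2$ is monotone) and for $\boldsymbol{\Gamma}$ (by combining with data processing of the generalized divergence $\boldsymbol{D}$). So the real work is to show $\gamma(\omega_{A'B'}) \leq \gamma(\sigma_{AB})$ when $\omega_{A'B'} = (\mathcal{N}_{A\to A'} \otimes \mathcal{M}_{B\to B'})(\sigma_{AB})$.

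The natural strategy is to take any feasible triple $(K_A, L_B, V_{AB})$ for the optimization defining $\gamma(\sigma_{AB})$ in \eqref{eq:basic-gamma-measure} and transport it through the local channels, setting $K_{A'} \coloneqq \mathcal{N}(K_A)$, $L_{B'} \coloneqq \mathcal{M}(L_B)$, and $\widetilde V_{A'B'} \coloneqq (\mathcal{N} \otimes \mathcal{M})(V_{AB})$. All three remain Hermitian because quantum channels preserve Hermiticity. The constraint $K_{A'} \otimes L_{B'} \pm \widetilde V_{A'B'} \geq 0$ is immediate, since $\mathcal{N} \otimes \mathcal{M}$ is completely positive and $(\mathcal{N} \otimes \mathcal{M})(K_A \otimes L_B \pm V_{AB}) = K_{A'} \otimes L_{B'} \pm \widetilde V_{A'B'}$. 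Likewise $\operatorname{Tr}[K_{A'} \otimes L_{B'}] = \operatorname{Tr}[K_A \otimes L_B]$ by trace preservation, which matches the objective values.

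The main obstacle is the partial transpose constraint $T_{B'}(\widetilde V_{A'B'} \pm \omega_{A'B'}) \geq 0$, because in general $T_{B'}$ does not commute with $\mathcal{M}_{B \to B'}$. The key observation is that if $\{M_k\}_k$ are Kraus operators for $\mathcal{M}$, then the map $\overline{\mathcal{M}}$ with Kraus operators $\{\overline{M_k}\}_k$ (entrywise complex conjugates in the chosen bases) is a bona fide quantum channel, and one has the intertwining identity $T_{B'} \circ \mathcal{M}_{B \to B'} = \overline{\mathcal{M}}_{B \to B'} \circ T_B$. Using this identity on the $B$-side while letting $\mathcal{N}$ pass through $T_{B'}$ unchanged gives
\begin{equation}
T_{B'}\!\left(\widetilde V_{A'B'} \pm \omega_{A'B'}\right) = (\mathcal{N}_{A\to A'} \otimes \overline{\mathcal{M}}_{B\to B'})\!\left(T_B(V_{AB} \pm \sigma_{AB})\right),
\end{equation}
and the right-hand side is positive semi-definite because $T_B(V_{AB} \pm \sigma_{AB}) \geq 0$ by feasibility of the original triple and $\mathcal{N} \otimes \overline{\mathcal{M}}$ is completely positive. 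Hence the transported triple is feasible for $\gamma(\omega_{A'B'})$ with the same objective, and taking the infimum over feasible triples for $\sigma_{AB}$ yields \eqref{eq:C-gamma-DP-local-chs}.

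For \eqref{eq:bold-gamma-DP-local-chs}, I would argue as follows. Given any $\sigma_{AB} \geq 0$ with $\gamma(\sigma_{AB}) \leq 1$, the operator $\sigma_{A'B'}' \coloneqq (\mathcal{N} \otimes \mathcal{M})(\sigma_{AB})$ satisfies $\gamma(\sigma_{A'B'}') \leq \gamma(\sigma_{AB}) \leq 1$ by the part just proved, so $\sigma_{A'B'}'$ is feasible in the infimum defining $\boldsymbol{\Gamma}(\omega_{A'B'})$. Hence
\begin{equation}
\boldsymbol{\Gamma}(\omega_{A'B'}) \leq \boldsymbol{D}\!\left(\omega_{A'B'} \,\Vert\, \sigma_{A'B'}'\right) \leq \boldsymbol{D}(\rho_{AB} \Vert \sigma_{AB}),
\end{equation}
where the second inequality is data processing for $\boldsymbol{D}$ under the channel $\mathcal{N} \otimes \mathcal{M}$. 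Taking the infimum over all admissible $\sigma_{AB}$ delivers the claim. If the infimum defining $\boldsymbol{\Gamma}(\rho_{AB})$ is not attained, one applies the argument to an $\varepsilon$-optimal $\sigma_{AB}$ and sends $\varepsilon \to 0$.
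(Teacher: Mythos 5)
Your proof is correct and follows essentially the same route as the paper: the identity $T_{B'}\circ\mathcal{M}=\overline{\mathcal{M}}\circ T_B$ with the conjugate channel is exactly the paper's observation that $T_{B'}\circ\mathcal{M}\circ T_B$ is completely positive with Kraus operators $T(M_i^{\dagger})=\overline{M_i}$, and the transported feasible triple and the argument for $\boldsymbol{\Gamma}$ via data processing of $\boldsymbol{D}$ match the paper's proof step for step.
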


\begin{proof}
See Appendix~\ref{app:data-proc-local-chs}.
\end{proof}

\begin{corollary}
[Local Isometric Invariance]\label{cor:local-iso-inv}Let $\sigma_{AB}$ be a
bipartite positive semi-definite operator, $\mathcal{U}_{A\rightarrow
A^{\prime}}$ an isometric channel, and $\mathcal{V}_{B\rightarrow B^{\prime}}$
an isometric channel. Then%
\begin{align}
C_{\gamma}(A;B)_{\sigma}  &  =C_{\gamma}(A^{\prime};B^{\prime})_{\omega},\\
\boldsymbol{\Gamma}(A;B)_{\sigma}  &  =\boldsymbol{\Gamma}(A^{\prime
};B^{\prime})_{\omega}, \label{eq:Gamma-iso-inv}%
\end{align}
where $\omega_{A^{\prime}B^{\prime}}\coloneqq(\mathcal{U}_{A\rightarrow
A^{\prime}}\otimes\mathcal{V}_{B\rightarrow B^{\prime}})(\sigma_{AB})$.
\end{corollary}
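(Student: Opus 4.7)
The plan is to deduce Corollary~\ref{cor:local-iso-inv} directly from Proposition~\ref{prop:DP-local-chs} by exploiting the fact that isometric channels are reversible via suitable quantum channels. Applying Proposition~\ref{prop:DP-local-chs} once gives the easy inequalities
\begin{equation}
C_\gamma(A;B)_\sigma \geq C_\gamma(A';B')_\omega, \qquad \boldsymbol{\Gamma}(A;B)_\sigma \geq \boldsymbol{\Gamma}(A';B')_\omega,
\end{equation}
since $\mathcal{U}_{A\to A'}$ and $\mathcal{V}_{B\to B'}$ are in particular quantum channels. So the work lies in proving the opposite inequalities.

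For this, I would construct explicit ``reversal'' channels $\mathcal{R}_{A'\to A}$ and $\mathcal{S}_{B'\to B}$ that undo the isometries. Writing $\mathcal{U}_{A\to A'}(\cdot)=U(\cdot)U^{\dagger}$ for an isometry $U$ satisfying $U^\dagger U = I_A$, I would set
\begin{equation}
\mathcal{R}_{A'\to A}(X_{A'}) \coloneqq U^\dagger X_{A'} U + \Tr\!\bigl[(I_{A'}-UU^\dagger)X_{A'}\bigr]\,\tau_A,
\end{equation}
for an arbitrary fixed state $\tau_A$, and analogously define $\mathcal{S}_{B'\to B}$ from the isometry implementing $\mathcal{V}$. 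A quick check shows that $\mathcal{R}$ is completely positive and trace preserving, and that $\mathcal{R}_{A'\to A}\circ\mathcal{U}_{A\to A'}=\mathrm{id}_A$, since $U^\dagger U=I_A$ makes the second term vanish on the range of $\mathcal{U}$.

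Applying Proposition~\ref{prop:DP-local-chs} once more with these reversal channels to $\omega_{A'B'}$ gives
\begin{equation}
C_\gamma(A';B')_\omega \geq C_\gamma(A;B)_{(\mathcal{R}\otimes\mathcal{S})(\omega)} = C_\gamma(A;B)_\sigma,
\end{equation}
and the analogous inequality for $\boldsymbol{\Gamma}$, which combined with the first direction yields the claimed equalities. The only subtlety is to ensure that $\mathcal{R}$ and $\mathcal{S}$ are bona fide quantum channels so that Proposition~\ref{prop:DP-local-chs} applies; the construction above takes care of this. I do not expect any significant obstacle here, since the argument is a standard use of reversibility of isometries, and no new property of $\gamma$ needs to be established beyond the data-processing inequality already proved.
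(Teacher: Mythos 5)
Your proposal is correct and follows essentially the same route as the paper: one application of Proposition~\ref{prop:DP-local-chs} for the forward inequality, and a second application with an explicit reversal channel of the form $U^{\dagger}(\cdot)U + \operatorname{Tr}[(I_{A'}-UU^{\dagger})(\cdot)]\,\tau_A$ for the converse, which is exactly the channel $\mathcal{L}_{A'\to A}$ used in the paper's proof. No gaps.
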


\begin{proof}
The inequality $C_{\gamma}(A;B)_{\sigma}\geq C_{\gamma}(A^{\prime};B^{\prime
})_{\omega}$ follows by a direct application of
Proposition~\ref{prop:DP-local-chs}. Let us define the channels%
\begin{align}
\mathcal{L}_{A^{\prime}\rightarrow A}(\cdot)  &  \coloneqq(\mathcal{U}%
_{A\rightarrow A^{\prime}})^{\dag}(\cdot)+\operatorname{Tr}[(\operatorname{id}%
_{A^{\prime}}-(\mathcal{U}_{A\rightarrow A^{\prime}})^{\dag})(\cdot)]\tau
_{A},\notag \\
\mathcal{M}_{B^{\prime}\rightarrow B}(\cdot)  &  \coloneqq(\mathcal{V}%
_{B\rightarrow B^{\prime}})^{\dag}(\cdot)+\operatorname{Tr}[(\operatorname{id}%
_{B^{\prime}}-(\mathcal{V}_{B\rightarrow B^{\prime}})^{\dag})(\cdot)]\eta_{B},
\end{align}
where $\tau_{A}$ and $\eta_{B}$ are arbitrary quantum states. Consider that
$\mathcal{L}_{A^{\prime}\rightarrow A}\circ\mathcal{U}_{A\rightarrow
A^{\prime}}=\operatorname{id}_{A}$ and $\mathcal{M}_{B^{\prime}\rightarrow
B}\circ\mathcal{V}_{B\rightarrow B^{\prime}}=\operatorname{id}_{B}$. Then the
opposite inequality also follows by applying
Proposition~\ref{prop:DP-local-chs}:%
\begin{align}
C_{\gamma}(A^{\prime};B^{\prime})_{\omega}  &  \geq C_{\gamma}(A;B)_{\zeta}
  =C_{\gamma}(A;B)_{\sigma},
\end{align}
where $\zeta_{AB}\coloneqq(\mathcal{L}_{A^{\prime}\rightarrow A}%
\otimes\mathcal{M}_{B^{\prime}\rightarrow B})(\omega_{A^{\prime}B^{\prime}%
})=\sigma_{AB}$. The same line of reasoning establishes~\eqref{eq:Gamma-iso-inv}.
\end{proof}

\begin{proposition}
[Non-Negativity and Faithfulness]\label{prop:non-neg-state-meas}Let $\rho_{AB}$ be a bipartite
state. Then $C_{\gamma}(A;B)_{\rho}$ and $\boldsymbol{\Gamma}(A;B)_{\rho}%
$\ are non-negative, i.e.
\begin{equation}
C_{\gamma}(A;B)_{\rho}\geq0,\qquad\boldsymbol{\Gamma}(A;B)_{\rho}\geq0 ,
\end{equation}
and equal to zero if $\rho_{AB}=\sigma_{A}\otimes \tau_{B}$ is a product state. Furthermore, $C_{\gamma}$ is faithful, i.e., $C_\gamma(\rho_{AB})=0$ implies that $\rho_{AB}$ is a product state, and $\boldsymbol{\Gamma}$ is faithful provided that the underlying divergence $\boldsymbol{D}$ is itself faithful (positive definite) and also lower semicontinuous in its second argument.
\end{proposition}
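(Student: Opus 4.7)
The plan is to treat the four assertions separately, exploiting the structure of the SDP defining $\gamma$ and the general-divergence axioms from Section~\ref{sec:notation}.

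\textbf{Non-negativity and product-state vanishing.} For $C_\gamma$, I would apply the trace to the key constraints in the definition of $\gamma(\rho_{AB})$: the operator inequality $K_A \otimes L_B \geq V_{AB}$ yields $\Tr[K_A \otimes L_B] \geq \Tr[V_{AB}]$, while $T_B(V_{AB} - \rho_{AB}) \geq 0$, combined with the trace-preserving property of $T_B$, yields $\Tr[V_{AB}] \geq \Tr[\rho_{AB}] = 1$. Chaining these two gives $\gamma(\rho_{AB}) \geq 1$, hence $C_\gamma(\rho_{AB}) \geq 0$; the same argument actually yields $\gamma(\sigma_{AB}) \geq \Tr[\sigma_{AB}]$ for every PSD operator $\sigma_{AB}$, which will be reused below. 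To show $C_\gamma(\sigma_A \otimes \tau_B) = 0$, I would plug in the product ansatz $K_A = \sigma_A$, $L_B = \tau_B$, $V_{AB} = \sigma_A \otimes \tau_B$: two of the four constraints collapse to $0 \geq 0$, and the remaining two reduce to $2\sigma_A \otimes \tau_B \geq 0$ and $2\sigma_A \otimes \tau_B^{T} \geq 0$, both trivially valid. The objective equals $1$, so $\gamma = 1$ and $C_\gamma = 0$. The vanishing of $\boldsymbol{\Gamma}$ on product states then follows immediately from $\boldsymbol{\Gamma} \leq C_\gamma$, as recorded in \eqref{eq:Big-gamma-C-gamma-relation}.

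\textbf{Non-negativity of $\boldsymbol{\Gamma}$.} For any feasible $\sigma_{AB}$ in \eqref{eq:Gamma-quantity-gen-div-def}, the estimate $\gamma(\sigma_{AB}) \leq 1$ together with the bound $\gamma(\sigma_{AB}) \geq \Tr[\sigma_{AB}]$ from the previous paragraph gives $\Tr[\sigma_{AB}] \leq 1$. Applying data processing to the trace-out channel and then the scaling relation, $\boldsymbol{D}(\rho_{AB} \Vert \sigma_{AB}) \geq \boldsymbol{D}(1 \Vert \Tr[\sigma_{AB}]) \geq 0$, the last inequality being the observation at the end of Section~\ref{sec:notation}.

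\textbf{Faithfulness of $C_\gamma$.} Here I would first establish that the infimum defining $\gamma(\rho_{AB})$ is attained via a compactness argument: given a minimizing sequence $(K_n, L_n, V_n)$, rescale within the product $K_n \otimes L_n$ so that $\Tr[K_n] = \Tr[L_n] = \sqrt{\Tr[K_n]\Tr[L_n]}$, which is bounded; the PSD constraints then bound $\|V_n\|_\infty$ in terms of $\Tr[K_n \otimes L_n]$, so all three sequences lie in compact sets, and passing constraints to a convergent subsequence yields a feasible $(K, L, V)$ with $\Tr[K \otimes L] = \gamma(\rho_{AB}) = 1$. The chain of trace inequalities from the first paragraph becomes a chain of equalities, and since a PSD operator with vanishing trace must be zero, we obtain $K \otimes L = V$ and $T_B(V) = T_B(\rho_{AB})$; involutivity of $T_B$ then gives $\rho_{AB} = K \otimes L$, which is a product of quantum states after appropriate normalization of $K$ and $L$.

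\textbf{Faithfulness of $\boldsymbol{\Gamma}$.} Assuming $\boldsymbol{D}$ is positive definite and lower semicontinuous in its second argument, I would argue that the feasible set $\{\sigma_{AB} \geq 0 : \gamma(\sigma_{AB}) \leq 1\}$ is compact: boundedness follows from $\Tr[\sigma_{AB}] \leq 1$, while closedness follows from lower semicontinuity of $\gamma(\cdot)$, itself a consequence of the SDP-limit argument used in the previous paragraph. Lower semicontinuity of $\boldsymbol{D}(\rho_{AB} \Vert \cdot)$ then forces the infimum in $\boldsymbol{\Gamma}$ to be attained at some $\sigma^*$; positive definiteness of $\boldsymbol{D}$ gives $\sigma^* = \rho_{AB}$, hence $\gamma(\rho_{AB}) \leq 1$, and the preceding step finishes the argument. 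The main obstacle throughout is the attainment issue in the two faithfulness steps; once compactness is secured, the remaining reasoning is just trace-tracking through the SDP constraints.
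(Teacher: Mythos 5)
Your proof is correct, and for the first three claims (non-negativity of both quantities, their vanishing on product states, and non-negativity of $\boldsymbol{\Gamma}$ via the trace channel and the scaling relation) it follows the paper's own route essentially verbatim, down to the feasible point for product states (your $V_{AB}=\sigma_A\otimes\tau_B$ versus the paper's $V_{AB}=T_B(\sigma_A\otimes\tau_B)$, an immaterial difference). For the two faithfulness claims, however, you take a genuinely different and arguably cleaner path. The paper first reduces faithfulness of $C_\gamma$ to that of $\boldsymbol{\Gamma}$ by instantiating $\boldsymbol{D}$ with the singular divergence that is finite only on multiples of $\rho$, and then runs a quantitative approximation argument: for every $\delta>0$ it produces a feasible $\sigma_{AB}$ with $\Tr[\sigma_{AB}]\geq 2^{-\delta}$ and shows $\left\Vert K_A\otimes L_B-\sigma_{AB}\right\Vert_1\leq\left(1+\min\{d_A,d_B\}\right)(1-2^{-\delta})$ using the estimate $\Vert X_{AB}\Vert_1\leq\min\{d_A,d_B\}\Vert T_B(X_{AB})\Vert_1$, before passing to a convergent subsequence as $\delta\to 0$ and invoking closedness of the set of product operators together with lower semicontinuity and faithfulness of $\boldsymbol{D}$. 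You instead establish attainment of the two infima by compactness and then exploit the equality case of the trace chain: a positive semi-definite operator with vanishing trace is zero, so $K_A\otimes L_B=V_{AB}=\rho_{AB}$ exactly. This avoids the dimension-dependent trace-norm estimate entirely, at the cost of having to verify lower semicontinuity of $\gamma$ and closedness of its sublevel set, and your sketch of that verification is sound. The one step you should make explicit is that $K_A$ and $L_B$ may be taken positive semi-definite: adding the two constraints $K_A\otimes L_B\pm V_{AB}\geq 0$ gives $K_A\otimes L_B\geq 0$, which for a nonzero tensor product of Hermitian operators forces both factors to be semi-definite of the same sign, and flipping both signs changes nothing. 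Without this observation, boundedness of $\Tr[K_n]$ does not confine the Hermitian operators $K_n$ to a compact set, and your extraction of convergent subsequences would not go through.
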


\begin{proof}
See Appendix~\ref{app:non-neg}.
\end{proof}

\begin{proposition}
[Dimension Bound]\label{prop:dim-bound-state-measure}Let $\rho_{AB}$ be a
bipartite state. Then
\begin{equation}
\boldsymbol{\Gamma}(A;B)_{\rho}\leq C_{\gamma}(A;B)_{\rho}\leq\log_{2}
\min\left\{  d_{A},d_{B}\right\} ,  \label{eq:dim-bnd}
\end{equation}
where $d_A$ and $d_B$ denote the dimensions of the local systems.
\end{proposition}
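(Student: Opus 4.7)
The plan is to split the stated chain into two inequalities. The leftmost, $\boldsymbol{\Gamma}(A;B)_\rho \leq C_\gamma(A;B)_\rho$, was already recorded in \eqref{eq:Big-gamma-C-gamma-relation}, so the actual work is to prove $\gamma(\rho_{AB}) \leq \min\{d_A, d_B\}$.

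My strategy is to exhibit an explicit feasible triple in the SDP~\eqref{eq:basic-gamma-measure}. The natural candidate is $K_A = I_A$, $L_B = \rho_B$, and $V_{AB} = I_A \otimes \rho_B$. This gives $\operatorname{Tr}[K_A \otimes L_B] = d_A$ and trivially satisfies the constraint $K_A \otimes L_B \pm V_{AB} \geq 0$, since $K_A \otimes L_B = V_{AB}$. The remaining constraint $T_B(V_{AB} \pm \rho_{AB}) \geq 0$ unpacks to the key operator inequality
\[
I_A \otimes \rho_B^T \pm T_B(\rho_{AB}) \geq 0.
\]
Once verified, this yields $\gamma(\rho_{AB}) \leq d_A$, and the symmetric bound $\gamma(\rho_{AB}) \leq d_B$ then follows by applying the exchange symmetry (Proposition~\ref{prop:symm-exch-A-B}).

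The hard part is therefore the displayed operator inequality. I plan to reduce it to the pure-state case by convexity: decomposing $\rho_{AB} = \sum_k p_k |\psi_k\rangle\!\langle\psi_k|$, both sides are linear in the ensemble, so it suffices to prove the inequality for each pure summand. For a pure state $|\psi\rangle$ with Schmidt decomposition $|\psi\rangle = \sum_i \sqrt{p_i}\,|e_i\rangle|f_i\rangle$, a direct computation in the Schmidt basis shows that $I_A \otimes \psi_B^T \pm T_B(|\psi\rangle\!\langle\psi|)$ is block-diagonal: on each Schmidt-diagonal vector $|e_i f_i\rangle$ it acts as a nonnegative scalar, while on each two-dimensional span $\{|e_i f_j\rangle, |e_j f_i\rangle\}$ with $i \neq j$ it acts as a Hermitian matrix with diagonal entries $p_j, p_i$ and off-diagonal entries $\pm\sqrt{p_i p_j}$, which has determinant $0$ and nonnegative trace $p_i + p_j$, hence is positive semidefinite.

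The main subtlety I expect is that $T_B$ is defined with respect to a fixed computational basis that in general differs from the Schmidt basis of $|\psi\rangle$. I plan to address this via a local-unitary covariance argument: under $|\psi\rangle \mapsto (U_A \otimes V_B)|\psi\rangle$, both $I_A \otimes \psi_B^T$ and $T_B(|\psi\rangle\!\langle\psi|)$ transform together by a common congruence of the form $(U_A \otimes V_B^*)(\cdot)(U_A^\dagger \otimes V_B^T)$, which preserves positive semidefiniteness. This reduces the general pure-state case to the Schmidt-basis computation above and completes the proof.
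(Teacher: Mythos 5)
Your proof is correct, and it reaches the same feasible point as the paper ($K_A=I_A$, $L_B=\rho_B$, $V_{AB}=I_A\otimes\rho_B$, with the other bound obtained by symmetry rather than repeated explicitly), but it verifies feasibility by a genuinely different argument. The paper establishes $I_A\otimes\rho_B^T\pm T_B(\rho_{AB})\geq0$ by observing that the maps $\mathcal{P}^{\pm}(\cdot)=\operatorname{Tr}[\cdot]\,I\pm T(\cdot)$ are completely positive, since their Choi operators are $I\pm F=2\Pi^{\mathcal{S}/\mathcal{A}}$, the (rescaled) projections onto the symmetric and antisymmetric subspaces. You instead reduce to pure states by convexity (valid, since both $I_A\otimes\rho_B^T$ and $T_B(\rho_{AB})$ are linear in $\rho_{AB}$) and then do an explicit Schmidt-basis block computation, with the local-unitary covariance $T_B\bigl((U\otimes V)(\cdot)(U^\dagger\otimes V^\dagger)\bigr)=(U\otimes V^*)\,T_B(\cdot)\,(U^\dagger\otimes V^T)$ handling the mismatch between the Schmidt basis and the basis defining the transpose — a subtlety you were right to flag and which you resolve correctly. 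The paper's route is shorter and identifies the relevant structure (these are standard positive-partial-transpose-compatible maps, reusable elsewhere, e.g.\ in the second half of its own proof); yours is more elementary and self-contained, essentially reproving the complete positivity of $\mathcal{P}^{\pm}$ by direct diagonalization on rank-one inputs. Both are sound.
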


\begin{proof}
See Appendix~\ref{app:dim-bnd}.
\end{proof}

\begin{proposition}
[Scale Invariance]\label{prop:scale-invariance}Let $\sigma_{AB}$ be a positive
semi-definite, bipartite operator, and let $c>0$. Then%
\begin{equation}
\gamma(c\sigma_{AB})=c\gamma(\sigma_{AB}).
\end{equation}

\end{proposition}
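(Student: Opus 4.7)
The plan is to exhibit an explicit bijection between the feasible sets of the two semi-definite programs defining $\gamma(\sigma_{AB})$ and $\gamma(c\sigma_{AB})$, scaling the objective by exactly a factor of $c$.

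First I would prove the inequality $\gamma(c\sigma_{AB}) \leq c \gamma(\sigma_{AB})$. Take any feasible triple $(K_A, L_B, V_{AB})$ for $\gamma(\sigma_{AB})$, so that $T_B(V_{AB} \pm \sigma_{AB}) \geq 0$ and $K_A \otimes L_B \pm V_{AB} \geq 0$. Then the triple $(cK_A, L_B, cV_{AB})$ is feasible for $\gamma(c\sigma_{AB})$: linearity of $T_B$ and positivity of $c$ give $T_B(cV_{AB} \pm c\sigma_{AB}) = c\, T_B(V_{AB} \pm \sigma_{AB}) \geq 0$, and similarly $cK_A \otimes L_B \pm cV_{AB} = c(K_A \otimes L_B \pm V_{AB}) \geq 0$. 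The objective of this triple is $\Tr[cK_A \otimes L_B] = c\Tr[K_A \otimes L_B]$, so taking the infimum over all triples yields $\gamma(c\sigma_{AB}) \leq c\gamma(\sigma_{AB})$.

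For the reverse direction, I would apply the exact same argument to the operator $c\sigma_{AB}$ with the scalar $1/c > 0$, which gives $\gamma(\sigma_{AB}) = \gamma((1/c)(c\sigma_{AB})) \leq (1/c)\gamma(c\sigma_{AB})$, and thus $c\gamma(\sigma_{AB}) \leq \gamma(c\sigma_{AB})$. Combining the two inequalities yields the claim.

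There is really no obstacle here: scale invariance is a structural consequence of the fact that every constraint in the definition of $\gamma$ is linear and homogeneous in $(K_A, L_B, V_{AB}, \sigma_{AB})$ jointly, with the trace objective being homogeneous of degree one in $K_A \otimes L_B$. If anything, the only subtlety to flag is that one must scale $K_A$ (or equivalently $L_B$, but not both) together with $V_{AB}$ by $c$ so that the objective $\Tr[K_A \otimes L_B]$ picks up a single factor of $c$ rather than $c^2$.
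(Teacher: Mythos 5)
Your proof is correct and follows essentially the same route as the paper's: scale the feasible triple to $(cK_A, L_B, cV_{AB})$ to get $\gamma(c\sigma_{AB})\leq c\gamma(\sigma_{AB})$, then apply the same inequality to $c\sigma_{AB}$ with scalar $1/c$ to obtain the reverse. Your additional remark about scaling only one of $K_A$ or $L_B$ is a sensible clarification but does not change the argument.
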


\begin{proof}
Let $K_{A}$, $L_{B}$, and $V_{AB}$ be arbitrary choices for the optimization
problem for $\gamma(\sigma_{AB})$. Then $cK_{A}$, $L_{B}$, and $cV_{AB}$ are
particular choices for which the objective function evaluates to
$\operatorname{Tr}[cK_{A}\otimes L_{B}]=c\operatorname{Tr}[K_{A}\otimes
L_{B}]$ and such that the constraints for $c\sigma_{AB}$ hold. It follows then
that
$
\gamma(c\sigma_{AB})\leq c\gamma(\sigma_{AB})$.
To see the opposite inequality, consider applying this again to find that
$
\gamma(c^{-1}c\sigma_{AB})\leq c^{-1}\gamma(c\sigma_{AB})$,
which is equivalent to
$
c\gamma(\sigma_{AB})\leq\gamma(c\sigma_{AB})$.
\end{proof}

\begin{proposition}
[Classical Communication Bound]\label{prop:classical-comm-bnd}Let $\rho_{XAB}$
be a tripartite state, for which system $X$ is classical, i.e.,%
\begin{equation}
\rho_{XAB}\coloneqq \sum_{x}p(x)|x\rangle\!\langle x|_{X}\otimes\rho_{AB}^{x},
\end{equation}
where $\{p(x)\}_{x}$ is a probability distribution and $\{\rho_{AB}^{x}\}_{x}$
is a set of states. Then%
\begin{align}
C_{\gamma}(AX;B)_{\rho}  &  \leq\log_{2}d_{X}+C_{\gamma}(A;BX)_{\rho
},\label{eq:c_gam-comm-bnd}\\
\boldsymbol{\Gamma}(AX;B)_{\rho}  &  \leq\log_{2}d_{X}+\boldsymbol{\Gamma
}(A;BX)_{\rho}. \label{eq:big_gam-comm-bnd}%
\end{align}

\end{proposition}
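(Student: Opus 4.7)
The plan is to prove the $C_\gamma$ inequality by a direct SDP-level construction, and then to deduce the $\boldsymbol{\Gamma}$ inequality from it together with data processing and scaling of $\boldsymbol{D}$. Throughout I write $\Pi_X$ for the completely dephasing channel on $X$ in the computational basis, so in particular $\Pi_X(\rho_{XAB})=\rho_{XAB}$.

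For \eqref{eq:c_gam-comm-bnd} it suffices to show $\gamma(AX;B)_\rho\leq d_X\,\gamma(A;BX)_\rho$. Starting from any feasible triple $(K_A,L_{BX},V_{ABX})$ for $\gamma(A;BX)_\rho$, one may assume $K_A,L_{BX}\geq 0$ (the two constraints $K_A\otimes L_{BX}\pm V_{ABX}\geq 0$ force $K_A\otimes L_{BX}\geq 0$, and a global sign flip returns us to the PSD case). I would then argue that one may further assume $L_{BX}$ and $V_{ABX}$ are classical on $X$, by replacing them with $\Pi_X(L_{BX})$ and $\Pi_X(V_{ABX})$: since $\Pi_X$ is a CP trace-preserving map, the product-domination constraint and the objective value are preserved; and since $\Pi_X$ commutes with $T_{BX}$ (both compositions equal $T_B\circ\Pi_X$, using $T_X\circ\Pi_X=\Pi_X=\Pi_X\circ T_X$) while $\Pi_X(\rho)=\rho$, the PPT constraint is also preserved. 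Writing $L_{BX}=\sum_x|x\rangle\!\langle x|_X\otimes L_B^x$ with $L_B^x\geq 0$ and $V_{ABX}=\sum_x|x\rangle\!\langle x|_X\otimes V_{AB}^x$, set
\begin{equation*}
K'_{AX}:=K_A\otimes I_X,\qquad L'_B:=\Tr_X[L_{BX}]=\sum_x L_B^x,\qquad V'_{AXB}:=V_{ABX}.
\end{equation*}
The PPT condition $T_B(V'_{AXB}\pm\rho_{AXB})\geq 0$ follows because $V'$ and $\rho$ are both classical on $X$, so $T_X$ acts as the identity on them and $T_B=T_{BX}$ on this subspace. The product bound $K'_{AX}\otimes L'_B\pm V'_{AXB}\geq 0$ splits over the $X$-blocks into $K_A\otimes L'_B\pm V_{AB}^x\geq 0$, which follows from $L'_B\geq L_B^x$, $K_A\geq 0$, and the original per-block inequality $K_A\otimes L_B^x\pm V_{AB}^x\geq 0$. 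Finally, $\Tr[K'_{AX}\otimes L'_B]=d_X\,\Tr[K_A\otimes L_{BX}]$, so taking the infimum yields the claim.

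For \eqref{eq:big_gam-comm-bnd}, fix $\sigma_{ABX}\geq 0$ with $\gamma(A;BX)_\sigma\leq 1$ and set $\tilde\sigma_{AXB}:=\Pi_X(\sigma_{ABX})/d_X$. Data processing of $\gamma$ under the local channel $\Pi_X$ (Proposition~\ref{prop:DP-local-chs}) gives $\gamma(A;BX)_{\Pi_X(\sigma)}\leq 1$; since $\Pi_X(\sigma)$ is classical on $X$, the construction above applies with $\Pi_X(\sigma)$ in the role of $\rho$ to give $\gamma(AX;B)_{\Pi_X(\sigma)}\leq d_X$, and scale invariance (Proposition~\ref{prop:scale-invariance}) upgrades this to $\gamma(AX;B)_{\tilde\sigma}\leq 1$. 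Hence $\tilde\sigma$ is feasible for $\boldsymbol{\Gamma}(AX;B)_\rho$, and using the scaling and data-processing properties of $\boldsymbol{D}$ together with $\Pi_X(\rho)=\rho$,
\begin{equation*}
\boldsymbol{D}(\rho\Vert\tilde\sigma)=\boldsymbol{D}(\rho\Vert\Pi_X(\sigma))+\log_2 d_X=\boldsymbol{D}(\Pi_X(\rho)\Vert\Pi_X(\sigma))+\log_2 d_X\leq\boldsymbol{D}(\rho\Vert\sigma)+\log_2 d_X.
\end{equation*}
Infimizing over $\sigma$ finishes the proof.

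The main obstacle is the dephasing-based WLOG reduction: one must simultaneously verify that $\Pi_X$ preserves the PPT constraint (via the commutation $\Pi_X\circ T_{BX}=T_{BX}\circ\Pi_X$) and the product-domination constraint (via its complete positivity), without disturbing the objective. Once that reduction is secured, the remainder is straightforward $X$-block bookkeeping, and the passage from the $C_\gamma$ bound to the $\boldsymbol{\Gamma}$ bound is a short application of the scaling and data-processing axioms on $\boldsymbol{D}$.
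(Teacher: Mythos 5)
Your proof is correct and follows essentially the same route as the paper's: the identical choice $K'_{AX}=K_A\otimes I_X$, $L'_B=\operatorname{Tr}_X[L_{BX}]$, $V'=\overline{\Delta}_X(V_{ABX})$ for the $C_\gamma$ bound, and scale invariance of $\gamma$ for the $\boldsymbol{\Gamma}$ bound. If anything, you are slightly more careful in two spots the paper leaves implicit: the WLOG positivity of $K_A$ and $L_{BX}$ (needed to pass from $L_B^x$ to $\sum_x L_B^x$), and the dephasing of $\sigma_{ABX}$ before invoking the $\gamma$ inequality in the $\boldsymbol{\Gamma}$ step, since that inequality was established only for operators classical on $X$.
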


\begin{proof}
See Appendix~\ref{app:classical-comm-bnd}.
\end{proof}

\begin{proposition}
[Subadditivity]\label{prop:subadd-state-meas}Let $\rho_{A_{1}A_{2}B_{1}B_{2}%
}\coloneqq\sigma_{A_{1}B_{1}}\otimes\tau_{A_{2}B_{2}}$, where $\sigma
_{A_{1}B_{1}}$ and $\tau_{A_{2}B_{2}}$ are bipartite operators. Then the
following subadditivity inequality holds%
\begin{equation}
C_{\gamma}(A_{1}A_{2};B_{1}B_{2})_{\rho}\leq C_{\gamma}(A_{1};B_{1})_{\sigma
}+C_{\gamma}(A_{2};B_{2})_{\tau}. \label{eq:subadd-c-gamma}%
\end{equation}
If the underlying generalized divergence $\boldsymbol{D}$ is subadditive, then
$\boldsymbol{\Gamma}$ is subadditive for a state $\rho_{A_{1}A_{2}B_{1}B_{2}%
}=\sigma_{A_{1}B_{1}}\otimes\tau_{A_{2}B_{2}}$:
\begin{equation}
\boldsymbol{\Gamma}(A_{1}A_{2};B_{1}B_{2})_{\rho}\leq\boldsymbol{\Gamma}%
(A_{1};B_{1})_{\sigma}+\boldsymbol{\Gamma}(A_{2};B_{2})_{\tau}.
\label{eq:subadd-big-gamma}%
\end{equation}

\end{proposition}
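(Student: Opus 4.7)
The plan is to construct feasible points for the optimization defining $\gamma(\sigma_{A_1B_1}\otimes\tau_{A_2B_2})$ by tensor product from feasible points for the individual SDPs. Given Hermitian triples $(K_{A_1},L_{B_1},V_{A_1B_1})$ feasible for $\gamma(\sigma_{A_1B_1})$ and $(K'_{A_2},L'_{B_2},V'_{A_2B_2})$ feasible for $\gamma(\tau_{A_2B_2})$, the natural candidate for the combined problem is $K_{A_1}\otimes K'_{A_2}$, $L_{B_1}\otimes L'_{B_2}$, $V_{A_1B_1}\otimes V'_{A_2B_2}$. Its objective value factorizes as $\Tr[K_{A_1}\otimes L_{B_1}]\cdot\Tr[K'_{A_2}\otimes L'_{B_2}]$, so once feasibility is verified, taking infima over the two factors delivers the multiplicative bound $\gamma(\sigma\otimes\tau)\leq\gamma(\sigma)\gamma(\tau)$, and~\eqref{eq:subadd-c-gamma} follows upon taking logarithms.

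The feasibility check reduces in both constraints to a single algebraic identity: if $M_i,W_i$ are Hermitian with $M_i\pm W_i\geq 0$ for $i=1,2$, then
\begin{equation}
M_1\otimes M_2 \pm W_1\otimes W_2 = \tfrac{1}{2}\bigl[(M_1+W_1)\otimes(M_2\pm W_2)+(M_1-W_1)\otimes(M_2\mp W_2)\bigr],
\end{equation}
exhibiting the left-hand side as a sum of tensor products of positive semi-definite operators and hence positive semi-definite. For the dominance constraint $(K_{A_1}\otimes K'_{A_2})\otimes(L_{B_1}\otimes L'_{B_2})\pm V_{A_1B_1}\otimes V'_{A_2B_2}\geq 0$, I apply the identity with $M_i := K\otimes L$ and $W_i := V$ from the $i$th subproblem. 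Since the partial transpose factorizes as $T_{B_1B_2}=T_{B_1}\otimes T_{B_2}$, applying the identity a second time with $M_i := T_{B_i}(V)$ and $W_i := T_{B_i}$ of the corresponding input operator establishes the partial-transpose constraint $T_{B_1B_2}(V_{A_1B_1}\otimes V'_{A_2B_2}\pm\sigma_{A_1B_1}\otimes\tau_{A_2B_2})\geq 0$.

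For~\eqref{eq:subadd-big-gamma}, I would choose $\alpha_{A_1B_1}$ and $\beta_{A_2B_2}$ to be near-optimal in the infima defining $\boldsymbol{\Gamma}(A_1;B_1)_\sigma$ and $\boldsymbol{\Gamma}(A_2;B_2)_\tau$, so that $\gamma(\alpha)\leq 1$ and $\gamma(\beta)\leq 1$. The multiplicativity of $\gamma$ just established yields $\gamma(\alpha\otimes\beta)\leq\gamma(\alpha)\gamma(\beta)\leq 1$, so $\alpha\otimes\beta$ is feasible in the infimum defining $\boldsymbol{\Gamma}(A_1A_2;B_1B_2)_{\sigma\otimes\tau}$. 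Invoking the assumed subadditivity of $\boldsymbol{D}$ on product inputs yields $\boldsymbol{D}(\sigma\otimes\tau\Vert\alpha\otimes\beta)\leq\boldsymbol{D}(\sigma\Vert\alpha)+\boldsymbol{D}(\tau\Vert\beta)$, and taking infima on the right concludes the argument. The only non-routine step in the whole proof is the tensor-positivity identity above; I do not foresee any serious obstacle.
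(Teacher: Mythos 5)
Your proposal is correct and follows essentially the same route as the paper: tensor feasible triples together, verify feasibility via the fact that $M_1\pm W_1\geq 0$ and $M_2\pm W_2\geq 0$ imply $M_1\otimes M_2\pm W_1\otimes W_2\geq 0$ (your explicit identity is just the paper's ``sum the four product inequalities'' argument written out), and then for $\boldsymbol{\Gamma}$ combine near-optimal feasible operators with subadditivity of $\boldsymbol{D}$. No gaps.
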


\begin{proof}
See Appendix~\ref{app:subadd-state-meas}.
\end{proof}

\begin{lemma}
\label{lem:fid-bnd-free-ops}The following bound holds:%
\begin{equation}
\sup_{\sigma_{AB}\geq0 :  \gamma(\sigma_{AB})\leq1} F\Big(\overline{\Phi
}_{AB}^{d},\sigma_{AB}\Big)\leq\frac{1}{d},
\end{equation}
where $\overline{\Phi}_{AB}^{d}$ is the maximally classical correlated state
from~\eqref{eq:max-class-corr-state} and $F(\omega,\tau)\coloneqq \left\Vert
\sqrt{\omega}\sqrt{\tau}\right\Vert _{1}^{2}$ is the fidelity of states
$\omega$ and$~\tau$~\cite{U76}.
\end{lemma}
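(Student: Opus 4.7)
The plan is to exploit two features: (i) $\overline{\Phi}_{AB}^d$ is diagonal in the product computational basis and invariant under local dephasing, enabling a reduction to a classical fidelity; and (ii) each rank-one projector $|mm\rangle\!\langle mm|$ is also invariant under $T_B$, which is precisely the symmetry needed to collapse the four SDP constraints defining $\gamma(\sigma_{AB})$ into scalar inequalities at the diagonal.

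First I would apply the data-processing inequality for fidelity with the total dephasing channel $\Delta_A \otimes \Delta_B$ in the computational basis. Since $\overline{\Phi}_{AB}^d$ is fixed by it, one obtains $F(\overline{\Phi}_{AB}^d, \sigma_{AB}) \leq F(\overline{\Phi}_{AB}^d, (\Delta_A \otimes \Delta_B)(\sigma_{AB}))$. Both arguments on the right are diagonal in $\{|mn\rangle\}$, so the formula $\|\sqrt{\rho}\sqrt{\tau}\|_1 = \sum_i \sqrt{p_i q_i}$ for commuting spectra yields
\begin{equation*}
F\bigl(\overline{\Phi}_{AB}^d, (\Delta_A \otimes \Delta_B)(\sigma_{AB})\bigr) = \frac{1}{d}\Bigl(\sum_{m=0}^{d-1} \sqrt{s_m}\Bigr)^{\!2},
\end{equation*}
where $s_m \coloneqq \langle mm|\sigma_{AB}|mm\rangle \geq 0$. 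The lemma therefore reduces to showing $\sum_m \sqrt{s_m} \leq 1$.

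For this, fix any feasible triple $(K_A, L_B, V_{AB})$ for the SDP defining $\gamma(\sigma_{AB})$ with $\operatorname{Tr}[K_A \otimes L_B] \leq 1$. Because $T_B(|mm\rangle\!\langle mm|) = |mm\rangle\!\langle mm|$, sandwiching the PPT constraints $T_B(V_{AB} \pm \sigma_{AB}) \geq 0$ against this projector gives $\langle mm|V_{AB}|mm\rangle \pm s_m \geq 0$, while sandwiching the domination constraints $K_A \otimes L_B \pm V_{AB} \geq 0$ against the same projector gives $\langle m|K_A|m\rangle \langle m|L_B|m\rangle \pm \langle mm|V_{AB}|mm\rangle \geq 0$. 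Together with $s_m \geq 0$, these chain into $0 \leq s_m \leq \langle mm|V_{AB}|mm\rangle \leq \langle m|K_A|m\rangle \langle m|L_B|m\rangle$. Adding the two domination inequalities also yields $K_A \otimes L_B \geq 0$, so one may assume $K_A, L_B \geq 0$ without loss of generality (flip both signs if necessary).

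Cauchy--Schwarz now finishes the job:
\begin{equation*}
\sum_m \sqrt{s_m} \leq \sum_m \sqrt{\langle m|K_A|m\rangle \langle m|L_B|m\rangle} \leq \sqrt{\operatorname{Tr}[K_A]}\,\sqrt{\operatorname{Tr}[L_B]} = \sqrt{\operatorname{Tr}[K_A \otimes L_B]} \leq 1,
\end{equation*}
which combined with the earlier reduction gives $F(\overline{\Phi}_{AB}^d, \sigma_{AB}) \leq 1/d$ for every admissible $\sigma_{AB}$, whence the supremum bound. The main obstacle is recognizing that the diagonal projector $|mm\rangle\!\langle mm|$ is simultaneously invariant under $T_B$ and perfectly tailored for a Cauchy--Schwarz estimate on the resulting scalar bounds; without such a test observable, the PPT and trace constraints would interact in a far less transparent way, and the remaining steps are essentially routine.
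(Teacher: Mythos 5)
Your proof is correct and follows essentially the same route as the paper's: data processing of the fidelity under local dephasing, reduction to the classical fidelity of the diagonal entries $\langle mm|\sigma_{AB}|mm\rangle$, chaining the two pairs of SDP constraints through the $T_B$-invariant projector $|mm\rangle\!\langle mm|$, and a final Cauchy--Schwarz estimate against $\operatorname{Tr}[K_A\otimes L_B]\leq 1$. Your explicit remark that $K_A\otimes L_B\geq 0$ lets one take $K_A,L_B\geq 0$ without loss of generality is a small point of added care that the paper leaves implicit when it splits the Cauchy--Schwarz bound into $\operatorname{Tr}[K_A]\operatorname{Tr}[L_B]$.
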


\begin{proof}
See Appendix~\ref{app:pf-fid-bnd-free-ops}.
\end{proof}

Recall that the sandwiched R\'{e}nyi relative entropy of a state~$\rho$ and a
positive semi-definite operator $\sigma$ is defined for $\alpha\in\left(
0,1\right)  \cup(1,\infty)$ as~\cite{MDSFT13,WWY13}%
\begin{equation}
\widetilde{D}_{\alpha}(\rho\Vert\sigma)\coloneqq\frac{2\alpha}{\alpha-1}%
\log_{2}\left\Vert \sigma^{\left(  1-\alpha\right)  /2\alpha}\rho
^{1/2}\right\Vert _{2\alpha}.\label{eq:sandwich-renyi-def}%
\end{equation}
It converges to the quantum relative entropy~\cite{U62}%
\begin{equation}
D(\rho\Vert\sigma)\coloneqq\operatorname{Tr}[\rho(\log_{2}\rho-\log_{2}%
\sigma)]\label{eq:q-rel-entr}%
\end{equation}
in the limit $\alpha\rightarrow1$, and it is a generalized divergence for $\alpha \geq 1/2$. See~\cite{KW20book} for further properties.

\begin{proposition}
\label{prop:key-dim-bnd-converse}Let $\omega_{AB}$ be a state satisfying%
\begin{equation}
F\Big(\overline{\Phi}_{AB}^{d},\omega_{AB}\Big)\geq1-\varepsilon,
\end{equation}
for $\varepsilon\in\left[  0,1\right]  $. Then, for all $\alpha>1$,%
\begin{equation}
\log_{2}d\leq\widetilde{\Gamma}_{\alpha}(\omega_{AB})+\frac{\alpha}{\alpha
-1}\log_{2}\!\left(  \frac{1}{1-\varepsilon}\right)  ,
\end{equation}
where $\widetilde{\Gamma}_{\alpha}(\omega_{AB})$ is defined from
\eqref{eq:Gamma-quantity-gen-div-def} with the underlying divergence taken to
be the sandwiched R\'{e}nyi relative entropy~$\widetilde{D}_{\alpha}$.
\end{proposition}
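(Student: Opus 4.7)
The plan is to reduce the inequality to a scalar estimate by dephasing both operators into the computational product basis, and then to establish that estimate with a three-term H\"older inequality. Let $\Delta = \Delta_A \otimes \Delta_B$ denote the local completely dephasing channel, where $\Delta_X$ dephases system $X$ in its computational basis. Three monotonicity observations hold simultaneously: (i) by data-processing of $\widetilde{D}_\alpha$, $\widetilde{D}_\alpha(\omega_{AB}\|\sigma_{AB}) \geq \widetilde{D}_\alpha(\Delta(\omega_{AB})\|\Delta(\sigma_{AB}))$; (ii) by Proposition~\ref{prop:DP-local-chs}, $\gamma(\Delta(\sigma_{AB})) \leq \gamma(\sigma_{AB}) \leq 1$, so Lemma~\ref{lem:fid-bnd-free-ops} is applicable to $\Delta(\sigma_{AB})$; and (iii) because $\Delta(\overline{\Phi}_{AB}^d) = \overline{\Phi}_{AB}^d$, monotonicity of the fidelity preserves the hypothesis: $F(\overline{\Phi}_{AB}^d, \Delta(\omega_{AB})) \geq 1-\varepsilon$. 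Hence I may assume that $\omega_{AB}$ and $\sigma_{AB}$ are themselves diagonal in the computational product basis.

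In the diagonal case, setting $a_m \coloneqq \langle mm|\omega_{AB}|mm\rangle$ and $b_m \coloneqq \langle mm|\sigma_{AB}|mm\rangle$ for $m = 0, \ldots, d-1$, the closed-form expression for the fidelity between commuting positive operators turns the two fidelity inequalities into the scalar constraints $\sum_m \sqrt{a_m} \geq \sqrt{d(1-\varepsilon)}$ and $\sum_m \sqrt{b_m} \leq 1$. Simultaneously, the sandwiched R\'enyi divergence between diagonal operators reduces to the classical R\'enyi divergence, and discarding the nonnegative off-diagonal ($m \neq n$) contributions yields the lower bound
\begin{equation*}
\widetilde{D}_\alpha(\omega_{AB}\|\sigma_{AB}) \;\geq\; \frac{1}{\alpha-1}\log_2 \sum_{m=0}^{d-1} a_m^\alpha b_m^{1-\alpha}.
\end{equation*}
The proposition is thus reduced to proving the scalar inequality $\sum_m a_m^\alpha b_m^{1-\alpha} \geq d^{\alpha-1}(1-\varepsilon)^\alpha$ subject to the two square-root-sum constraints above.

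For the scalar step, I would decompose $\sqrt{a_m} = \big(a_m^\alpha b_m^{1-\alpha}\big)^{1/(2\alpha)} \cdot b_m^{(\alpha-1)/(2\alpha)} \cdot 1$ and apply the three-term H\"older inequality with exponents $(2\alpha,\, \alpha/(\alpha-1),\, 2\alpha)$ whose reciprocals sum to $1$. This yields
\begin{equation*}
\sum_m \sqrt{a_m} \;\leq\; \Big(\sum_m a_m^\alpha b_m^{1-\alpha}\Big)^{\!1/(2\alpha)} \Big(\sum_m \sqrt{b_m}\Big)^{\!(\alpha-1)/\alpha} \cdot d^{1/(2\alpha)},
\end{equation*}
and raising to the $2\alpha$-th power, rearranging, and substituting the two constraints delivers $\sum_m a_m^\alpha b_m^{1-\alpha} \geq d^{\alpha-1}(1-\varepsilon)^\alpha$. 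Taking the infimum over admissible $\sigma_{AB}$ on the left then produces the claimed bound on $\widetilde{\Gamma}_\alpha(\omega_{AB})$.

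The main obstacle is identifying the correct triple of H\"older exponents: a straightforward two-term application of H\"older's inequality misses the dimension prefactor $d^{\alpha-1}$ entirely, and inserting the dummy third factor $1$ is precisely what lets the combinatorial factor $d = \sum_m 1$ emerge with the correct power $1/(2\alpha)$. A minor bookkeeping issue is that when $b_m = 0$ while $a_m > 0$ the R\'enyi summand is formally $+\infty$; this is harmless since it makes the bound vacuous.
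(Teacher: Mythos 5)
Your proof is correct, and it takes a genuinely different route from the paper's. The paper argues at the level of the divergences themselves: it invokes the inequality $\widetilde{D}_{\alpha}(\rho_{0}\Vert\sigma)-\widetilde{D}_{\beta}(\rho_{1}\Vert\sigma)\geq\frac{\alpha}{\alpha-1}\log_{2}F(\rho_{0},\rho_{1})$ from \cite[Lemma~1]{Wang2019states} (with $\beta$ the H\"older conjugate satisfying $\frac{1}{2\alpha}+\frac{1}{2\beta}=1$), lifts it to the $\widetilde{\Gamma}$ level as in \cite[Lemma~14]{EW22}, and then uses monotonicity in the R\'enyi order to pass from $\widetilde{\Gamma}_{\beta}(\overline{\Phi}_{AB}^{d})$ down to $\widetilde{\Gamma}_{1/2}(\overline{\Phi}_{AB}^{d})\geq\log_{2}d$, the last step being exactly Lemma~\ref{lem:fid-bnd-free-ops} since $\widetilde{D}_{1/2}=-\log_{2}F$. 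You instead pinch both operators into the computational product basis --- all three of your monotonicity observations are valid, since $\Delta$ is a local channel and fixes $\overline{\Phi}_{AB}^{d}$ --- and then prove the resulting commutative statement by hand with a three-term H\"older inequality; your exponents $(2\alpha,\alpha/(\alpha-1),2\alpha)$ and the bookkeeping all check out, and the discarded $(\sum_m\sqrt{b_m})^{2(\alpha-1)}\leq1$ factor is handled correctly. In effect your H\"older step is the commutative core of the combination of the Wang--Wilde lemma with the $\beta\to1/2$ monotonicity step, with the pinching reduction being what makes a purely classical argument sufficient. What your version buys is self-containedness: no appeal to external lemmas beyond Lemma~\ref{lem:fid-bnd-free-ops} (which both proofs use, yours to extract the constraint $\sum_m\sqrt{b_m}\leq1$ and the paper's to bound $\widetilde{\Gamma}_{1/2}$ of the target state). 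What the paper's version buys is modularity: the divergence-level inequality applies verbatim to other target states and other resource measures without redoing the pinching and the scalar optimization.
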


\begin{proof}
See Appendix~\ref{app:pseudo-cont-bnd}.
\end{proof}

\section{Distillable Randomness of a Bipartite State}

\label{sec:dist-rand-state}

Let $\rho_{AB}$ be a bipartite state. A protocol for randomness distillation
assisted by 1W-LOCC begins with a quantum channel of the form $\mathcal{E}%
_{A\rightarrow LM}$, where the output systems are classical. The system $L$ is
communicated to Bob over a noiseless classical channel. After that, he acts
with a quantum channel $\mathcal{D}_{BL\rightarrow M^{\prime}}$. The state at
the end of the protocol is thus%
\begin{equation}
\omega_{MM^{\prime}}\coloneqq(\mathcal{D}_{BL\rightarrow M^{\prime}}%
\circ\mathcal{E}_{A\rightarrow LM})(\rho_{AB}).
\end{equation}
A $(d,\varepsilon)$ protocol for randomness distillation satisfies%
\begin{equation}
p_{\text{err}}((\mathcal{E},\mathcal{D});\rho_{AB})\coloneqq1-F(\omega
_{MM^{\prime}},\overline{\Phi}_{MM^{\prime}}^{d})\leq\varepsilon
.\label{eq:err-criterion-RD}%
\end{equation}
The one-shot distillable randomness of $\rho_{AB}$ is defined as%
\begin{multline}
R^{\varepsilon}(\rho_{AB})\coloneqq\\
\sup_{\substack{\mathcal{E}_{A\rightarrow LM},\\\mathcal{D}_{BL\rightarrow
M^{\prime}}}}\left\{  \log_{2}d-\log_{2}d_{L}:p_{\text{err}}((\mathcal{E}%
,\mathcal{D});\rho_{AB})\leq\varepsilon\right\}  .
\end{multline}
In words, it is the largest net rate at which maximally correlated randomness
can be generated. Indeed, we need to subtract off the classical communication
used in the protocol; 
otherwise, the quantity would be trivially equal to $+\infty$.

The distillable randomness of $\rho_{AB}$ is defined as%
\begin{equation}
R(\rho_{AB})\coloneqq\inf_{\varepsilon\in(0,1]}\liminf_{n\rightarrow\infty
}\frac{1}{n}R^{\varepsilon}(\rho_{AB}^{\otimes n}),
\end{equation}
and the strong converse distillable randomness as%
\begin{equation}
\widetilde{R}(\rho_{AB})\coloneqq\sup_{\varepsilon\in\lbrack0,1)}%
\limsup_{n\rightarrow\infty}\frac{1}{n}R^{\varepsilon}(\rho_{AB}^{\otimes n}).
\end{equation}

We can generalize these definitions to allow for assistance by arbitrary LOCC
(see~\cite{CLM+14}\ for a detailed account of LOCC). A protocol for randomness
distillation assisted by LOCC\ begins with Alice performing the channel
$\mathcal{E}_{A\rightarrow L_{1}A_{1}}^{(1)}$, where the system $L_{1}$ is
classical and communicated to Bob. Bob then performs the channel
$\mathcal{D}_{L_{1}B\rightarrow L_{2}B_{2}}^{(2)}$, where the system $L_{2}$
is classical and communicated to Alice. This process continues for $k$ rounds; we denote Alice's other channels by $\{\mathcal{E}_{L_{i-1}%
A_{i-2}\rightarrow L_{i}A_{i}}^{(i)}\}_{i}$, for $i\in\left\{  3,5,\ldots
\right\}$, and Bob's other channels by $\{\mathcal{D}_{L_{i-1}B_{i-2}%
\rightarrow L_{i}B_{i}}^{(i)}\}_{i}$, for $i\in\left\{  4,6,\ldots\right\}$.
The last two channels in the protocol, without loss of generality, are then
$\mathcal{E}_{L_{k-1}A_{k-2}\rightarrow L_{k}M}^{(k)}$ for Alice and
$\mathcal{D}_{L_{k}B_{k-1}\rightarrow M^{\prime}}^{(k+1)}$ for Bob. The state
at the end of the protocol is thus%
\begin{equation}
\omega_{MM^{\prime}}\coloneqq(\mathcal{D}^{(k+1)}\circ\mathcal{E}^{(k)}%
\circ\cdots\circ\mathcal{D}^{(2)}\circ\mathcal{E}^{(1)})(\rho_{AB}%
).\label{eq:final-state-k-round-prot}%
\end{equation}
A $(d,\varepsilon)$ protocol for LOCC-assisted randomness distillation 
 satisfies
\begin{equation}
p_{\text{err}}(\mathcal{P}^{(k)};\rho_{AB})\coloneqq1-F(\omega_{MM^{\prime}%
},\overline{\Phi}_{MM^{\prime}}^{d})\leq\varepsilon,
\end{equation}
where $\mathcal{P}^{(k)}$ is a shorthand for the whole protocol, i.e.,
$\mathcal{P}^{(k)}=\{\mathcal{E}^{(1)},\mathcal{D}^{(2)},\ldots,\mathcal{E}%
^{(k)},\mathcal{D}^{(k+1)}\}$. The one-shot distillable randomness of
$\rho_{AB}$, assisted by LOCC, is defined as%
\begin{multline}
R_{\leftrightarrow}^{\varepsilon}(\rho_{AB})\coloneqq\\
\sup_{k\in\mathbb{N}}\sup_{\mathcal{P}^{(k)}}\left\{  \log_{2}d-\sum_{i=1}%
^{k}\log_{2}d_{L_{i}}:p_{\text{err}}(\mathcal{P}^{(k)};\rho_{AB}%
)\leq\varepsilon\right\}  .
\end{multline}
The LOCC-assisted distillable randomness of a bipartite state $\rho_{AB}$ is defined as
\begin{equation}
R_{\leftrightarrow}(\rho_{AB})\coloneqq\inf_{\varepsilon\in(0,1]}%
\liminf_{n\rightarrow\infty}\frac{1}{n}R_{\leftrightarrow}^{\varepsilon}%
(\rho_{AB}^{\otimes n}),
\end{equation}
and the strong converse LOCC-assisted distillable randomness as%
\begin{equation}
\widetilde{R}_{\leftrightarrow}(\rho_{AB})\coloneqq\sup_{\varepsilon\in
\lbrack0,1)}\limsup_{n\rightarrow\infty}\frac{1}{n}R_{\leftrightarrow
}^{\varepsilon}(\rho_{AB}^{\otimes n}).
\end{equation}

The following inequalities hold by definition:
\begin{align}
R^{\varepsilon}(\rho_{AB})  & \leq R_{\leftrightarrow}^{\varepsilon}(\rho
_{AB}),\qquad R(\rho_{AB})\leq R_{\leftrightarrow}(\rho_{AB}),\\
\widetilde{R}(\rho_{AB})  & \leq\widetilde{R}_{\leftrightarrow}(\rho
_{AB}),\qquad R(\rho_{AB})\leq\widetilde{R}(\rho_{AB}),\\
R_{\leftrightarrow}(\rho_{AB})  & \leq\widetilde{R}_{\leftrightarrow}%
(\rho_{AB}).
\end{align}
We provide upper bounds for $R_{\leftrightarrow}^{\varepsilon}(\rho_{AB})$ and
$\widetilde{R}_{\leftrightarrow}(\rho_{AB})$ in the next section.

\section{Upper Bounds on the Distillable Randomness}

\label{sec:upper-bnds}

\begin{theorem}
\label{thm:one-shot-bnd}The following bound holds for all $\alpha>1$:%
\begin{equation}
R_{\leftrightarrow}^{\varepsilon}(\rho_{AB})\leq\widetilde{\Gamma}_{\alpha
}(\rho_{AB})+\frac{\alpha}{\alpha-1}\log_{2}\!\left(  \frac{1}{1-\varepsilon
}\right)  ,\label{eq:LOCC-assisted-up-bnd}%
\end{equation}
where $\widetilde{\Gamma}_{\alpha}(\rho_{AB})$ is defined from
\eqref{eq:Gamma-quantity-gen-div-def} and~\eqref{eq:sandwich-renyi-def}.
\end{theorem}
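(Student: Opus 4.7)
The plan is to fix an arbitrary $k$-round LOCC protocol $\mathcal{P}^{(k)} = \{\mathcal{E}^{(1)},\mathcal{D}^{(2)},\ldots,\mathcal{E}^{(k)},\mathcal{D}^{(k+1)}\}$ with error at most $\varepsilon$ and to track $\widetilde{\Gamma}_\alpha$ across the bipartite cut as the protocol unfolds, combining data processing (Proposition~\ref{prop:DP-local-chs}), exchange symmetry (Proposition~\ref{prop:symm-exch-A-B}) and the classical communication bound (Proposition~\ref{prop:classical-comm-bnd}) at each round, and finally invoking Proposition~\ref{prop:key-dim-bnd-converse} on the output state.

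More concretely, first I would observe that each round of the protocol decomposes into (i) a local quantum channel applied on one side, followed by (ii) a relabelling that moves a classical register $L_i$ of dimension $d_{L_i}$ from one party to the other. Step (i) only decreases $\widetilde{\Gamma}_\alpha$ by Proposition~\ref{prop:DP-local-chs}. Step (ii) leaves the global state unchanged but changes the cut; by Proposition~\ref{prop:classical-comm-bnd} (and Proposition~\ref{prop:symm-exch-A-B} to handle the rounds in which the classical register travels from Bob to Alice), the quantity $\widetilde{\Gamma}_\alpha$ across the new cut is at least the quantity across the old cut minus $\log_2 d_{L_i}$.

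Chaining these round-by-round inequalities produces a telescoping bound of the form
\begin{equation}
\widetilde{\Gamma}_\alpha(\rho_{AB}) \;\geq\; \widetilde{\Gamma}_\alpha(M;M')_{\omega} - \sum_{i=1}^{k} \log_2 d_{L_i},
\end{equation}
where $\omega_{MM'}$ is the final state defined in~\eqref{eq:final-state-k-round-prot}. Since the protocol is an $(d,\varepsilon)$ protocol we have $F(\omega_{MM'},\overline{\Phi}^d_{MM'}) \geq 1-\varepsilon$, so Proposition~\ref{prop:key-dim-bnd-converse} gives
\begin{equation}
\log_2 d \;\leq\; \widetilde{\Gamma}_\alpha(M;M')_{\omega} + \frac{\alpha}{\alpha-1}\log_2\!\left(\frac{1}{1-\varepsilon}\right).
\end{equation}
Combining the last two displays and rearranging yields $\log_2 d - \sum_{i=1}^{k}\log_2 d_{L_i} \leq \widetilde{\Gamma}_\alpha(\rho_{AB}) + \frac{\alpha}{\alpha-1}\log_2\!\bigl(\tfrac{1}{1-\varepsilon}\bigr)$, and taking the supremum over $k$ and $\mathcal{P}^{(k)}$ gives~\eqref{eq:LOCC-assisted-up-bnd}.

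The conceptually important step is the middle one: the classical communication bound as stated in Proposition~\ref{prop:classical-comm-bnd} is asymmetric (it charges $\log_2 d_X$ for moving a classical system from Alice to Bob), so without the $A\leftrightarrow B$ symmetry established in Proposition~\ref{prop:symm-exch-A-B} the telescoping argument would fail for the ``Bob-to-Alice'' rounds. This is exactly the payoff of defining $\gamma$ and $\boldsymbol{\Gamma}$ symmetrically, and it is the main obstacle the setup in Section~\ref{sec:cl-corr-meas} was built to overcome; the remaining bookkeeping (regrouping systems $(L_{i-1}A_{i-2}, B_{i-1}) \to (L_i A_i, L_{i-1}B_{i-1}) \to (A_i, L_i B_{i-1})$, etc.) is routine once the three ingredients above are in place.
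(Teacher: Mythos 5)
Your proposal is correct and follows essentially the same route as the paper's proof: data processing under local channels (Proposition~\ref{prop:DP-local-chs}), the classical communication bound combined with exchange symmetry (Propositions~\ref{prop:classical-comm-bnd} and~\ref{prop:symm-exch-A-B}) for each transmitted register, a telescoping over the $k$ rounds, and Proposition~\ref{prop:key-dim-bnd-converse} applied to the final state $\omega_{MM'}$ --- the paper merely runs the chain backward from $\omega_{MM'}$ rather than forward from $\rho_{AB}$. One small slip in your prose: the per-round communication step should say that the quantity across the new cut is \emph{at most} the quantity across the old cut \emph{plus} $\log_2 d_{L_i}$ (which is the direction your displayed telescoping bound actually uses, and which follows from the same two propositions by swapping the roles of the parties), rather than ``at least \ldots{} minus.''
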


\begin{proof}
Let us begin by proving the bound%
\begin{equation}
R^{\varepsilon}(\rho_{AB})\leq\widetilde{\Gamma}_{\alpha}(\rho_{AB}%
)+\frac{\alpha}{\alpha-1}\log_{2}\!\left(  \frac{1}{1-\varepsilon}\right)  ,
\end{equation}
and then we discuss how to generalize the proof afterward. Consider an
arbitrary protocol for randomness distillation assisted by 1W-LOCC. Since the
final state $\omega_{MM^{\prime}}$ satisfies~\eqref{eq:err-criterion-RD}, we
apply Proposition~\ref{prop:key-dim-bnd-converse} to conclude that%
\begin{equation}
\log_{2}d\leq\widetilde{\Gamma}_{\alpha}(M;M^{\prime})_{\omega}+\frac{\alpha
}{\alpha-1}\log_{2}\!\left(  \frac{1}{1-\varepsilon}\right)  .
\end{equation}
Next we apply data processing under local channels
(Proposition~\ref{prop:DP-local-chs}) to conclude that%
\begin{equation}
\widetilde{\Gamma}_{\alpha}(M;M^{\prime})_{\omega}\leq\widetilde{\Gamma
}_{\alpha}(M;BL)_{\mathcal{E}(\rho)}.
\end{equation}
We then apply Propositions~\ref{prop:symm-exch-A-B} and~\ref{prop:classical-comm-bnd} to find that
\begin{equation}
\widetilde{\Gamma}_{\alpha}(M;BL)_{\mathcal{E}(\rho)}\leq\log_{2}%
d_{L}+\widetilde{\Gamma}_{\alpha}(LM;B)_{\mathcal{E}(\rho)}.
\end{equation}
Next we apply data processing under local channels again
(Proposition~\ref{prop:DP-local-chs}) to conclude that%
\begin{equation}
\widetilde{\Gamma}_{\alpha}(LM;B)_{\mathcal{E}(\rho)}\leq\widetilde{\Gamma
}_{\alpha}(A;B)_{\rho}.
\end{equation}
Putting everything together, we finally conclude that%
\begin{equation}
\log_{2}d-\log_{2}d_{L}\leq\widetilde{\Gamma}_{\alpha}(A;B)_{\rho}%
+\frac{\alpha}{\alpha-1}\log_{2}\!\left(  \frac{1}{1-\varepsilon}\right)  .
\end{equation}
Since this holds for an arbitrary randomness distillation protocol, we conclude the desired bound.

To extend this to $R_{\leftrightarrow}^{\varepsilon}(\rho_{AB})$, we can
iterate the same reasoning, going backward through a protocol of the form
discussed around~\eqref{eq:final-state-k-round-prot}, using
Propositions~\ref{prop:symm-exch-A-B},~\ref{prop:DP-local-chs}, and~\ref{prop:classical-comm-bnd} to
arrive at the following upper bound for a $k$-round protocol:%
\begin{equation}
\log_{2}d-\sum_{i=1}^{k}\log_{2}d_{L_{i}}\leq\widetilde{\Gamma}_{\alpha
}(A;B)_{\rho}+\frac{\alpha}{\alpha-1}\log_{2}\!\left(  \frac{1}{1-\varepsilon
}\right)  .
\end{equation}
Since this upper bound is independent of the number $k$ of rounds, we conclude
the upper bound in~\eqref{eq:LOCC-assisted-up-bnd}.
\end{proof}

\begin{theorem}
\label{thm:asymp-str-conv-bnd}The following upper bound holds%
\begin{equation}
R_{\leftrightarrow}(\rho_{AB}) \leq  \widetilde{R}_{\leftrightarrow}(\rho_{AB})\leq\Gamma(A;B)_{\rho}%
,\label{eq:str-conv-dist-rand-up-bnd}%
\end{equation}
where $\Gamma(A;B)_{\rho}$ is defined from
\eqref{eq:Gamma-quantity-gen-div-def} and~\eqref{eq:q-rel-entr}.
\end{theorem}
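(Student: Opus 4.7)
The first inequality $R_{\leftrightarrow}(\rho_{AB}) \leq \widetilde{R}_{\leftrightarrow}(\rho_{AB})$ has already been recorded at the end of Section~\ref{sec:dist-rand-state} and is immediate from the definitions (a $\liminf$ does not exceed a $\limsup$, and $\inf_{\varepsilon\in(0,1]}\leq\sup_{\varepsilon\in[0,1)}$). So the plan is really to prove the asymptotic bound $\widetilde{R}_{\leftrightarrow}(\rho_{AB})\leq\Gamma(A;B)_{\rho}$, which I will do by a standard ``one-shot plus regularization'' argument built on top of Theorem~\ref{thm:one-shot-bnd}.

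First, I would apply Theorem~\ref{thm:one-shot-bnd} to the $n$-fold tensor product $\rho_{AB}^{\otimes n}$, obtaining, for each $\alpha>1$ and each $\varepsilon\in[0,1)$,
\begin{equation}
\tfrac{1}{n}R_{\leftrightarrow}^{\varepsilon}(\rho_{AB}^{\otimes n})\leq\tfrac{1}{n}\widetilde{\Gamma}_{\alpha}(\rho_{AB}^{\otimes n})+\tfrac{1}{n}\cdot\tfrac{\alpha}{\alpha-1}\log_{2}\!\left(\tfrac{1}{1-\varepsilon}\right).
\end{equation}
Because the sandwiched R\'enyi relative entropy $\widetilde{D}_{\alpha}$ is additive on tensor products (hence subadditive), Proposition~\ref{prop:subadd-state-meas} applies and yields $\widetilde{\Gamma}_{\alpha}(\rho_{AB}^{\otimes n})\leq n\,\widetilde{\Gamma}_{\alpha}(\rho_{AB})$. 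Substituting, taking $\limsup_{n\to\infty}$ to kill the second term, and then taking $\sup_{\varepsilon\in[0,1)}$, I obtain
\begin{equation}
\widetilde{R}_{\leftrightarrow}(\rho_{AB})\leq\widetilde{\Gamma}_{\alpha}(\rho_{AB})\qquad\text{for every }\alpha>1.
\end{equation}

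It remains to send $\alpha\to 1^{+}$. The expected main obstacle is the interchange of the infimum defining $\widetilde{\Gamma}_{\alpha}$ with the limit in $\alpha$, but this can be dealt with cheaply: for any fixed feasible $\sigma_{AB}$ (that is, $\sigma_{AB}\geq 0$ with $\gamma(\sigma_{AB})\leq1$), one has $\widetilde{D}_{\alpha}(\rho_{AB}\Vert\sigma_{AB})\to D(\rho_{AB}\Vert\sigma_{AB})$ as $\alpha\to 1^{+}$, so
\begin{equation}
\limsup_{\alpha\to 1^{+}}\widetilde{\Gamma}_{\alpha}(\rho_{AB})\leq\limsup_{\alpha\to 1^{+}}\widetilde{D}_{\alpha}(\rho_{AB}\Vert\sigma_{AB})=D(\rho_{AB}\Vert\sigma_{AB}),
\end{equation}
and taking the infimum over $\sigma_{AB}$ on the right-hand side gives $\limsup_{\alpha\to 1^{+}}\widetilde{\Gamma}_{\alpha}(\rho_{AB})\leq\Gamma(A;B)_{\rho}$. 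Combining this with the previous display yields $\widetilde{R}_{\leftrightarrow}(\rho_{AB})\leq\Gamma(A;B)_{\rho}$, completing the proof. No regularization of $\Gamma$ is needed, precisely because $\widetilde{\Gamma}_{\alpha}$ is already subadditive under tensor products and the one-shot bound in Theorem~\ref{thm:one-shot-bnd} is tight enough to absorb the error term $\tfrac{\alpha}{\alpha-1}\log_{2}(1/(1-\varepsilon))$ into an $o(n)$ contribution.
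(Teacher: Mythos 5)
Your proposal is correct and follows essentially the same route as the paper: apply the one-shot bound of Theorem~\ref{thm:one-shot-bnd} to $\rho_{AB}^{\otimes n}$, invoke Proposition~\ref{prop:subadd-state-meas} together with the additivity of $\widetilde{D}_{\alpha}$ to get $\widetilde{\Gamma}_{\alpha}(\rho_{AB}^{\otimes n})\leq n\,\widetilde{\Gamma}_{\alpha}(\rho_{AB})$, let $n\to\infty$ to remove the $\varepsilon$-dependent term, and then optimize over $\alpha>1$. The only cosmetic difference is in the last step: the paper simply asserts $\inf_{\alpha>1}\widetilde{\Gamma}_{\alpha}(A;B)_{\rho}=\Gamma(A;B)_{\rho}$, while you justify the needed inequality $\inf_{\alpha>1}\widetilde{\Gamma}_{\alpha}\leq\Gamma$ by fixing a feasible $\sigma_{AB}$ and using $\widetilde{D}_{\alpha}\to D$ as $\alpha\to1^{+}$, which is the same underlying fact.
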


\begin{proof}
Consider that, for all $\varepsilon\in(0,1]$ and $\alpha>1$, the following
bound holds from Theorem~\ref{thm:one-shot-bnd}:%
\begin{align}
  \frac{1}{n}R_{\leftrightarrow}^{\varepsilon}(\rho_{AB}^{\otimes
n})
&  \leq\frac{1}{n}\widetilde{\Gamma}_{\alpha}(A^{n};B^{n})_{\rho^{\otimes n}%
}+\frac{\alpha}{n(\alpha-1)}\log_{2}\!\left(  \frac{1}{1-\varepsilon}\right)
\notag \\
&  \leq\widetilde{\Gamma}_{\alpha}(A;B)_{\rho}+\frac{\alpha}{n(\alpha-1)}%
\log_{2}\!\left(  \frac{1}{1-\varepsilon}\right)  ,
\end{align}
where we applied subadditivity of $\widetilde{\Gamma}_{\alpha}$, which follows
from Proposition~\ref{prop:subadd-state-meas}\ and the fact that
$\widetilde{D}_{\alpha}$ is additive. Taking the limit as $n\rightarrow\infty
$, we find that%
\begin{equation}
\limsup_{n\rightarrow\infty}\frac{1}{n}R_{\leftrightarrow}^{\varepsilon}%
(\rho_{AB}^{\otimes n})\leq\widetilde{\Gamma}_{\alpha}(A;B)_{\rho}.
\end{equation}
Since this holds for all $\alpha>1$, we conclude that%
\begin{equation}
\limsup_{n\rightarrow\infty}\frac{1}{n}R_{\leftrightarrow}^{\varepsilon}%
(\rho_{AB}^{\otimes n})\leq\inf_{\alpha>1}\widetilde{\Gamma}_{\alpha
}(A;B)_{\rho}=\Gamma(A;B)_{\rho}.
\end{equation}
The upper bound holds for all $\varepsilon\in\lbrack0,1)$, and so we conclude
the desired 
inequalities in~\eqref{eq:str-conv-dist-rand-up-bnd}.
\end{proof}

\section{Semi-Definite Restrictions}

\label{sec:SDP-relax}

The upper bounds in Theorems~\ref{thm:one-shot-bnd} and
\ref{thm:asymp-str-conv-bnd} are not efficiently computable, because the
$\boldsymbol{\Gamma}$-measure in~\eqref{eq:Gamma-quantity-gen-div-def} involves a bilinear
optimization. One could attempt to use the approach from~\cite{HKT18} to
evaluate $\gamma$, but this does not remove all obstacles to having an efficient method for computing $\boldsymbol{\Gamma}$. 
Here we consider instead a semi-definite restriction of the $\gamma$-measure in
\eqref{eq:basic-gamma-measure}, which leads to alternative upper
bounds on $R_{\leftrightarrow}^{\varepsilon}(\rho_{AB})$ and $\widetilde
{R}_{\leftrightarrow}(\rho_{AB})$. Indeed, for a given bipartite operator
$\sigma_{AB}$ and state $\rho_{A}$, we define the following semi-definite
restriction of $\gamma(\sigma_{AB})$, taken by fixing $K_{A}=\rho_{A}$ in~\eqref{eq:basic-gamma-measure}:
\begin{align}
 \beta(\sigma_{AB},\rho_{A}) & \coloneqq\inf_{L_{B},V_{AB}\in\text{Herm}%
}\left\{
\begin{array}
[c]{c}%
\operatorname{Tr}[\rho_{A}\otimes L_{B}]:\\
T_{B}(V_{AB}\pm\sigma_{AB})\geq0,\\
\rho_{A}\otimes L_{B}\pm V_{AB}\geq0
\end{array}
\right\} .
\label{eq:beta-measure-states}
\end{align}
It is then clear that $\gamma(\sigma_{AB})\leq\beta(\sigma_{AB},\rho_{A})$ $\forall \rho_A$,
as well as that $\boldsymbol{\Gamma}(\rho_{AB})\leq\boldsymbol{\Upsilon}%
^{A}(\rho_{AB})$, where%
\begin{equation}
\boldsymbol{\Upsilon}^{A}(\rho_{AB})\coloneqq\inf_{\substack{\sigma_{AB}%
\geq0:\\ \beta(\sigma_{AB},\rho_{A})\leq1}}\boldsymbol{D}(\rho_{AB}\Vert
\sigma_{AB}).
\end{equation}
Note that one could also restrict the optimization of $L_{B}$ by fixing
$L_{B}=\rho_{B}$ and obtain the following quantities:
\begin{align}
 \beta(\sigma_{AB},\rho_{B})& \coloneqq\inf_{K_{A},V_{AB}\in\text{Herm}%
}\left\{
\begin{array}
[c]{c}%
\operatorname{Tr}[K_{A}\otimes\rho_{B}]:\\
T_{B}(V_{AB}\pm\sigma_{AB})\geq0,\\
K_{A}\otimes\rho_{B}\pm V_{AB}\geq0
\end{array}
\right\}  ,\notag \\
 \boldsymbol{\Upsilon}^{B}(\rho_{AB})& \coloneqq\inf_{\substack{\sigma_{AB}%
\geq0:\\
\beta(\sigma_{AB},\rho_{B})\leq1}}\boldsymbol{D}(\rho_{AB}\Vert
\sigma_{AB}).
\label{eq:Ups-meas-states}
\end{align}
So then it follows that%
\begin{equation}
\boldsymbol{\Gamma}(\rho_{AB})\leq\min\{\boldsymbol{\Upsilon}^{A}(\rho
_{AB}),\boldsymbol{\Upsilon}^{B}(\rho_{AB})\},\label{eq:restrict-opt-up-bnd}%
\end{equation}
as well as
\begin{align}
R_{\leftrightarrow}^{\varepsilon}(\rho_{AB})  & \leq\min\{\widetilde{\Upsilon
}_{\alpha}^{A}(\rho_{AB}),\widetilde{\Upsilon}_{\alpha}^{B}(\rho_{AB}
)\} +\alpha' \log_{2}\!\left(  \frac{1}{1-\varepsilon}\right)
,\notag \\
\widetilde{R}_{\leftrightarrow}(\rho_{AB})  & \leq\min\{\Upsilon^{A}(A;B)_{\rho
},\Upsilon^{B}(A;B)_{\rho}\},\label{eq:SDP-upper-bnd}
\end{align}
where $\alpha' \equiv \frac{\alpha}{\alpha-1}$,
as an immediate consequence of~\eqref{eq:restrict-opt-up-bnd}\ and
Theorems~\ref{thm:one-shot-bnd} and~\ref{thm:asymp-str-conv-bnd}. To compute the upper bound in the second line of~\eqref{eq:SDP-upper-bnd} efficiently, one can make use of the relative entropy optimization method from~\cite{FF18}. We remark here that the quantities in~\eqref{eq:beta-measure-states}--\eqref{eq:Ups-meas-states} are related to those defined previously in~\cite{WXD18,WFT19,Fang2019a}, and this point is discussed further in Appendix~\ref{app:beta-measures}. 

\begin{figure}
\begin{center}
\includegraphics[
width=8cm]{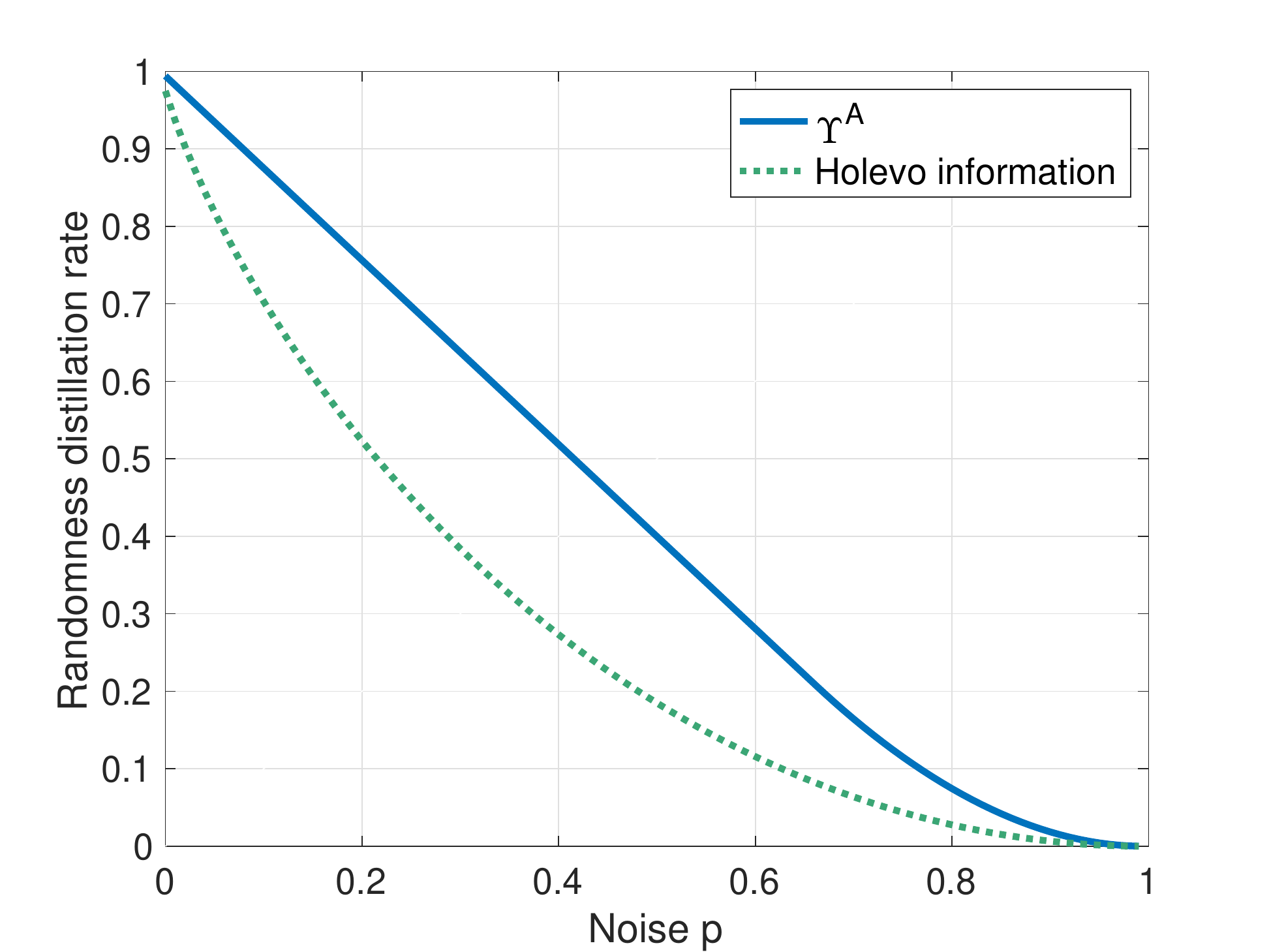}
\end{center}
\caption{Lower and upper bounds on the LOCC-assisted distillable randomness of a qubit-qubit isotropic state, as a function of the parameter $p\in[0,1]$.}
\label{fig:isotrop-plot}%
\end{figure}

As an example, in Figure~\ref{fig:isotrop-plot}, we plot the upper bound in~\eqref{eq:SDP-upper-bnd} for an
isotropic state, defined for $p\in\left[  0,1\right]  $ as $\left(
1-p\right)  \Phi_{AB}^{d}+pI_{AB}/d^{2}$, where $\Phi_{AB}^{d} \coloneqq \frac{1}{d}\sum_{i,j} |i\rangle\!\langle j|_A \otimes |i\rangle\!\langle j|_B $. For comparison, we also plot the
Holevo information lower bound from~\cite{DW04}. The code for generating this figure is available with the arXiv posting of our paper. We note the similarity with Figure~6\ of
\cite{DKQSWW22}, which is for the dynamical case. Clearly, there is a gap between the lower and upper
bounds, and a pertinent question  is to close this gap, just as
is the case for Figure~6 of~\cite{DKQSWW22}.

\section{Conclusion}

\label{sec:conclusion}

In this paper, we returned to the problem of distillable randomness of a
bipartite state, providing a number of upper bounds on this quantity that
are applicable in both the non-asymptotic and asymptotic regimes. To do so, we
introduced a measure of classical correlations contained in a bipartite state.
The main measure that we used to provide an upper bound is not clearly efficiently
computable; however,  we considered a semi-definite restriction
that serves as an upper bound.

Going forward from here, it is open to establish tighter
upper bounds on the distillable randomness. In future work, we plan to apply the recent lower bound on entanglement cost from~\cite[Eq.~(13)]{LR21}, along with the identity in~\cite[Theorem~1]{KW04} that relates entanglement cost and 1W-LOCC distillable randomness.  It is also
open to generalize these methods to the multipartite case (here, see~\cite{SVW05,VC19,SW20}).

{ \textit{Acknowledgements}---We acknowledge discussions with Dawei Ding, Sumeet Khatri, Yihui Quek, and
Peter Shor. LL was partly supported by the Alexander von Humboldt Foundation. BR was supported by the Japan Society for the Promotion of Science (JSPS) KAKENHI Grant No. 21F21015 and the JSPS Postdoctoral Fellowship for Research in Japan.  MMW acknowledges support from NSF\ Grant No.~1907615. }


\newpage

\bibliographystyle{IEEEtran}
\bibliography{Ref}


\newpage

\appendices

\section{Notation}

\label{app:notation}Here we provide some further notation needed to understand
the proofs in the following appendices. We denote the unnormalized maximally
entangled operator by%
\begin{equation}
\Gamma_{RA}\coloneqq|\Gamma\rangle\!\langle\Gamma|_{RA},\qquad|\Gamma
\rangle_{RA}\coloneqq\sum_{i=0}^{d-1}|i\rangle_{R}|i\rangle_{A},
\end{equation}
where $R\simeq A$ with dimension $d$ and $\{|i\rangle_{R}\}_{i=0}^{d-1}$ and
$\{|i\rangle_{A}\}_{i=0}^{d-1}$ are orthonormal bases. The notation $R\simeq
A$\ means that the systems $R$ and $A$ are isomorphic. The Choi operator of a
quantum channel $\mathcal{N}_{A\rightarrow B}$ is denoted by%
\begin{equation}
\Gamma_{RB}^{\mathcal{N}}\coloneqq\mathcal{N}_{A\rightarrow B}(\Gamma_{RA}).
\end{equation}

\section{Proof of Proposition~\ref{prop:DP-local-chs} (Data Processing under
Local Channels)}

\label{app:data-proc-local-chs}

We prove that $\gamma(\rho_{AB})\geq\gamma(\omega_{A^{\prime}B^{\prime}})$,
which is equivalent to the inequality~\eqref{eq:C-gamma-DP-local-chs} in
Proposition~\ref{prop:DP-local-chs}. Let $K_{A}$, $L_{B}$, and $V_{AB}$ be
arbitrary Hermitian operators satisfying $T_{B}(V_{AB}\pm\rho_{AB})\geq0$ and
$K_{A}\otimes L_{B}\pm V_{AB}\geq0$. Consider that the map $T_{B^{\prime}%
}\circ\mathcal{M}_{B\rightarrow B^{\prime}}\circ T_{B}$ is completely
positive, which follows because%
\begin{align}
&  (T_{B^{\prime}}\circ\mathcal{M}_{B\rightarrow B^{\prime}}\circ
T_{B})(\Gamma_{\hat{B}B})\nonumber\\
&  =(T_{B^{\prime}}\circ\mathcal{M}_{B\rightarrow B^{\prime}}\circ T_{\hat{B}%
})(\Gamma_{\hat{B}B})\\
&  =((T_{\hat{B}}\otimes T_{B^{\prime}})\circ\mathcal{M}_{B\rightarrow
B^{\prime}})(\Gamma_{\hat{B}B})\\
&  =T_{\hat{B}B^{\prime}}(\mathcal{M}_{B\rightarrow B^{\prime}}(\Gamma
_{\hat{B}B}))\geq0.
\end{align}
The last inequality follows because $\mathcal{M}_{B\rightarrow B^{\prime}}$ is
completely positive and $T_{\hat{B}B^{\prime}}$ is a positive map, which in
this case is acting on both systems $\hat{B}B^{\prime}$ and thus preserves
positivity. We also present an alternative proof that $T_{B^{\prime}}%
\circ\mathcal{M}_{B\rightarrow B^{\prime}}\circ T_{B}$ is completely positive
if $\mathcal{M}_{B\rightarrow B^{\prime}}$ is. Consider the following chain of
equalities for a Kraus decomposition of $\mathcal{M}_{B\rightarrow B^{\prime}%
}$ as $\mathcal{M}_{B\rightarrow B^{\prime}}(\cdot)=\sum_{i}M_{i}(\cdot
)M_{i}^{\dag}$:%
\begin{align}
&  (T_{B^{\prime}}\circ\mathcal{M}_{B\rightarrow B^{\prime}}\circ T_{B}%
)(\cdot)\nonumber\\
&  =\sum_{\ell,m}\sum_{j,k}\sum_{i}|\ell\rangle\!\langle m|M_{i}%
|j\rangle\!\langle k|(\cdot)|j\rangle\!\langle k|M_{i}^{\dag}|\ell
\rangle\!\langle m|\\
&  =\sum_{\ell,m}\sum_{j,k}\sum_{i}|\ell\rangle\!\langle k|M_{i}^{\dag}%
|\ell\rangle\!\langle k|(\cdot)|j\rangle\!\langle m|M_{i}|j\rangle\!\langle
m|\\
&  =\sum_{i}\sum_{\ell,k}|\ell\rangle\!\langle k|M_{i}^{\dag}|\ell
\rangle\!\langle k|(\cdot)\sum_{j,m}|j\rangle\!\langle m|M_{i}|j\rangle
\!\langle m|\\
& = \sum_{i} T(M_i^\dag) (\cdot) T(M_i) \\
&  =\sum_{i}E_{i}(\cdot)E_{i}^{\dag},
\end{align}
where $E_{i}\coloneqq T(M_{i}^{\dag})$. Thus, $\{E_{i}\}_{i}$ is a set of Kraus operators
for $T_{B^{\prime}}\circ\mathcal{M}_{B\rightarrow B^{\prime}}\circ T_{B}$,
which implies that this map is completely positive.

Since $\mathcal{N}_{A\rightarrow A^{\prime}}$ and $T_{B^{\prime}}%
\circ\mathcal{M}_{B\rightarrow B^{\prime}}\circ T_{B}$ are completely
positive, it follows that%
\begin{equation}
(\mathcal{N}_{A\rightarrow A^{\prime}}\otimes(T_{B^{\prime}}\circ
\mathcal{M}_{B\rightarrow B^{\prime}}\circ T_{B}))(T_{B}(V_{AB}\pm\rho
_{AB}))\geq0.
\end{equation}
Now consider that%
\begin{align}
&  (\mathcal{N}_{A\rightarrow A^{\prime}}\otimes(T_{B^{\prime}}\circ
\mathcal{M}_{B\rightarrow B^{\prime}}\circ T_{B}))(T_{B}(V_{AB}\pm\rho
_{AB}))\nonumber\\
&  =(\mathcal{N}_{A\rightarrow A^{\prime}}\otimes(T_{B^{\prime}}%
\circ\mathcal{M}_{B\rightarrow B^{\prime}}))(V_{AB}\pm\rho_{AB})\\
&  =T_{B^{\prime}}(\mathcal{N}_{A\rightarrow A^{\prime}}\otimes\mathcal{M}%
_{B\rightarrow B^{\prime}})(V_{AB}\pm\rho_{AB})\\
&  =T_{B^{\prime}}(V_{A^{\prime}B^{\prime}}^{\prime}\pm\omega_{A^{\prime
}B^{\prime}}),
\end{align}
where%
\begin{equation}
V_{A^{\prime}B^{\prime}}^{\prime}\coloneqq(\mathcal{N}_{A\rightarrow
A^{\prime}}\otimes\mathcal{M}_{B\rightarrow B^{\prime}})(V_{AB}).
\end{equation}
So it follows that%
\begin{equation}
T_{B^{\prime}}(V_{A^{\prime}B^{\prime}}^{\prime}\pm\omega_{A^{\prime}%
B^{\prime}})\geq0.
\end{equation}
Since $\mathcal{N}_{A\rightarrow A^{\prime}}$ and $\mathcal{M}_{B\rightarrow
B^{\prime}}$ are completely positive, it follows that%
\begin{equation}
(\mathcal{N}_{A\rightarrow A^{\prime}}\otimes\mathcal{M}_{B\rightarrow
B^{\prime}})(K_{A}\otimes L_{B}\pm V_{AB})\geq0,
\end{equation}
which is equivalent to%
\begin{equation}
K_{A^{\prime}}^{\prime}\otimes L_{B^{\prime}}^{\prime}\pm V_{A^{\prime
}B^{\prime}}^{\prime}\geq0,
\end{equation}
where%
\begin{equation}
K_{A^{\prime}}^{\prime}\coloneqq\mathcal{N}_{A\rightarrow A^{\prime}}%
(K_{A}),\qquad L_{B^{\prime}}^{\prime}\coloneqq\mathcal{M}_{B\rightarrow
B^{\prime}}(L_{B}).
\end{equation}
So we have shown that%
\begin{align}
T_{B}(V_{AB}\pm\rho_{AB})  &  \geq0\quad\Longrightarrow\quad T_{B^{\prime}%
}(V_{A^{\prime}B^{\prime}}^{\prime}\pm\omega_{A^{\prime}B^{\prime}}%
)\geq0,\label{eq:mono-local-pf-1}\\
K_{A}\otimes L_{B}\pm V_{AB}  &  \geq0\quad\Longrightarrow\quad K_{A^{\prime}%
}^{\prime}\otimes L_{B^{\prime}}^{\prime}\pm V_{A^{\prime}B^{\prime}}^{\prime
}\geq0. \label{eq:mono-local-pf-2}%
\end{align}
Also, since $\mathcal{N}_{A\rightarrow A^{\prime}}$ and $\mathcal{M}%
_{B\rightarrow B^{\prime}}$ are trace preserving, it follows that%
\begin{equation}
\operatorname{Tr}[K_{A}\otimes L_{B}]=\operatorname{Tr}[K_{A^{\prime}}%
^{\prime}\otimes L_{B^{\prime}}^{\prime}]. \label{eq:mono-local-pf-3}%
\end{equation}
Putting together~\eqref{eq:mono-local-pf-1}--\eqref{eq:mono-local-pf-3}, we
conclude that%
\begin{multline}
\inf_{K_{A},L_{B},V_{AB}}\left\{
\begin{array}
[c]{c}%
\operatorname{Tr}[K_{A}\otimes L_{B}]:\\
T_{B}(V_{AB}\pm\rho_{AB})\geq0,\\
K_{A}\otimes L_{B}\pm V_{AB}\geq0
\end{array}
\right\} \\
\geq\inf_{K_{A^{\prime}}^{\prime},L_{B^{\prime}}^{\prime},V_{A^{\prime
}B^{\prime}}^{\prime}}\left\{
\begin{array}
[c]{c}%
\operatorname{Tr}[K_{A^{\prime}}^{\prime}\otimes L_{B^{\prime}}^{\prime}]:\\
T_{B}(V_{A^{\prime}B^{\prime}}^{\prime}\pm\omega_{A^{\prime}B^{\prime}}%
)\geq0,\\
K_{A^{\prime}}^{\prime}\otimes L_{B^{\prime}}^{\prime}\pm V_{A^{\prime
}B^{\prime}}^{\prime}\geq0
\end{array}
\right\}  .
\end{multline}
This concludes the proof of~\eqref{eq:C-gamma-DP-local-chs}.

To see~\eqref{eq:bold-gamma-DP-local-chs}, let $\sigma_{AB}$ be an arbitrary
positive semi-definite operator satisfying $\gamma(\sigma_{AB})\leq1$. Then it
follows from~\eqref{eq:C-gamma-DP-local-chs} that $\tau_{A^{\prime}B^{\prime}%
}\coloneqq (\mathcal{N}_{A\rightarrow A^{\prime}}\otimes\mathcal{M}%
_{B\rightarrow B^{\prime}})(\sigma_{AB})$ satisfies%
\begin{equation}
\gamma(\tau_{A^{\prime}B^{\prime}})\leq\gamma(\sigma_{AB})\leq1.
\label{eq:data-proc-loc-chs-gamma}%
\end{equation}
So then, defining $\omega_{A^{\prime}B^{\prime}}\coloneqq (\mathcal{N}%
_{A\rightarrow A^{\prime}}\otimes\mathcal{M}_{B\rightarrow B^{\prime}}%
)(\rho_{AB})$, we find that%
\begin{align}
\boldsymbol{D}(\rho_{AB}\Vert\sigma_{AB})  &  \geq\boldsymbol{D}%
(\omega_{A^{\prime}B^{\prime}}\Vert\tau_{A^{\prime}B^{\prime}})\\
&  \geq\inf_{\substack{\tau_{A^{\prime}B^{\prime}}\geq0,\\\gamma
(\tau_{A^{\prime}B^{\prime}})\leq1}}\boldsymbol{D}(\omega_{A^{\prime}%
B^{\prime}}\Vert\tau_{A^{\prime}B^{\prime}})\\
&  =\boldsymbol{\Gamma}(\omega_{A^{\prime}B^{\prime}}).
\end{align}
The first inequality follows from the data-processing inequality for
$\boldsymbol{D}$, and the second from~\eqref{eq:data-proc-loc-chs-gamma}. The
equality follows from the definition in~\eqref{eq:Gamma-quantity-gen-div-def}.
Since the inequality holds for all $\sigma_{AB}\geq0$ satisfying
$\gamma(\sigma_{AB})\leq1$, we conclude the desired inequality in
\eqref{eq:bold-gamma-DP-local-chs}\ after taking the infimum.

\section{Proof of Proposition~\ref{prop:non-neg-state-meas} (Non-Negativity and Faithfulness)}

\label{app:non-neg}

First, it follows that $C_{\gamma}(A;B)_{\rho}$ takes its minimal value on a
product state, and it is equal to the same value for all product states. This
is because one can transition from an arbitrary state to a product state by
performing local channels that trace out the input and replace with a state.
Indeed, let $\mathcal{R}_{A}^{\sigma}(\cdot)\coloneqq\operatorname{Tr}%
_{A}[\cdot]\sigma_{A}$ and $\mathcal{R}_{B}^{\tau}(\cdot
)\coloneqq\operatorname{Tr}_{B}[\cdot]\tau_{B}$ be local replacer channels.
Then by applying inequality~\eqref{eq:C-gamma-DP-local-chs} in
Proposition~\ref{prop:DP-local-chs}, we conclude that%
\begin{equation}
C_{\gamma}(A;B)_{\rho}\geq C_{\gamma}(A;B)_{\omega}.
\end{equation}
By the same procedure, one can transition from an arbitrary product state to
another arbitrary product state by means of local channels. So the claim
stated above follows. The same argument, but using
\eqref{eq:bold-gamma-DP-local-chs}, implies that $\boldsymbol{\Gamma}%
(\rho_{AB})$ takes its minimal value on product states.

We now prove that this minimal value is zero. By definition, for a product
state $\omega_{AB}=\sigma_{A}\otimes\tau_{B}$,%
\begin{equation}
\gamma(A;B)_{\omega}=\inf_{\substack{K_{A},L_{B},\\V_{AB}\in\text{Herm}%
}}\left\{
\begin{array}
[c]{c}%
\operatorname{Tr}[K_{A}\otimes L_{B}]:\\
T_{B}(V_{AB}\pm\sigma_{A}\otimes\tau_{B})\geq0,\\
K_{A}\otimes L_{B}\pm V_{AB}\geq0
\end{array}
\right\}  .
\end{equation}
Applying some of the constraints, we conclude that%
\begin{align}
T_{B}(\sigma_{A}\otimes\tau_{B})  &  \leq T_{B}(V_{AB}),\\
V_{AB}  &  \leq K_{A}\otimes L_{B}.
\end{align}
Now taking a trace over these constraints, we conclude that%
\begin{align}
1  &  =\operatorname{Tr}[\sigma_{A}\otimes\tau_{B}]=\operatorname{Tr}%
[T_{B}(\sigma_{A}\otimes\tau_{B})]\\
&  \leq\operatorname{Tr}[T_{B}(V_{AB})]=\operatorname{Tr}[V_{AB}%
]\leq\operatorname{Tr}[K_{A}\otimes L_{B}].
\end{align}
So this establishes that $\gamma(A;B)_{\omega}\geq1$ for every product state
$\omega_{AB}$ (and also $\gamma(A;B)_{\rho}\geq1$ for every state $\rho_{AB}$,
by combining the observation in the first paragraph with $\gamma(A;B)_{\omega
}\geq1$ for every product state $\omega_{AB}$).

To see the opposite inequality for a product state $\omega_{AB}$, let us make
the choice $V_{AB}=T_{B}(\sigma_{A}\otimes\tau_{B})$, $K_{A}=\sigma_{A}$, and
$L_{B}=\tau_{B}$. For this choice, all constraints are satisfied, in part
because the partial transpose map is a positive map when acting on a product
state. So it follows that $\gamma(A;B)_{\omega}\leq1$, and combining with what
we previously showed, we conclude that $\gamma(A;B)_{\omega}=1$ for every
product state.

Now we turn to $\boldsymbol{\Gamma}(\rho_{AB})$. Consider that, for an
arbitrary positive semi-definite operator $\sigma_{AB}$, the condition
$\gamma(\sigma_{AB})\leq1$ implies that $\operatorname{Tr}[\sigma_{AB}]\leq1$
because the following holds for arbitrary $V_{AB}$, $K_{A}$, and $L_{B}$
satisfying the constraints in~\eqref{eq:basic-gamma-measure}:%
\begin{align}
\operatorname{Tr}[\sigma_{AB}]  &  =\operatorname{Tr}[T_{B}(\sigma_{AB}%
)]\leq\operatorname{Tr}[T_{B}(V_{AB})]\\
&  =\operatorname{Tr}[V_{AB}]\leq\operatorname{Tr}[K_{A}\otimes L_{B}].
\end{align}
Then taking an infimum over all $V_{AB}$, $K_{A}$, and $L_{B}$ satisfying
\eqref{eq:basic-gamma-measure} and applying the assumption $\gamma(\sigma
_{AB})\leq1$, we conclude that $\operatorname{Tr}[\sigma_{AB}]\leq1$. Now
applying a trace channel to $\boldsymbol{D}(\rho_{AB}\Vert\sigma_{AB})$ and
the non-negative property of a generalized divergence (i.e., $\boldsymbol{D}(1\Vert
c)\geq0,\quad\forall c\in(0,1]$), we conclude that $\boldsymbol{\Gamma}%
(\rho_{AB})\geq0$ for every state $\rho_{AB}$.

If the state of interest is a product state (i.e., $\omega_{AB}=\sigma
_{A}\otimes\tau_{B}$), then it follows that $\gamma(\omega_{AB})\leq1$, as
argued above, so that we can choose $\sigma_{AB}=\omega_{AB}$. With this
choice it follows that $\boldsymbol{\Gamma}(\omega_{AB})\leq\boldsymbol{D}%
(\omega_{AB}\Vert\omega_{AB})=0$, with the latter equality following from the
zero-value property of generalized divergences. Combining with the previous
inequality, we conclude that $\boldsymbol{\Gamma}(\omega_{AB})=0$ for every
product state $\omega_{AB}$.

Let us now turn to the proof that $C_\gamma$ is faithful, i.e.\ that $C_\gamma(\omega_{AB})=0$ implies $\omega_{AB} = \rho_A \otimes \tau_B$ is a product state, and that $\boldsymbol{\Gamma}$ is also faithful provided that the underlying divergence $\boldsymbol{D}$ is lower semicontinuous in its second argument. First, observe that $C_\gamma=\boldsymbol{\Gamma}$ for the particular choice of divergence
\begin{equation}
    \boldsymbol{D}(\rho\|\sigma) = \left\{ \begin{array}{ll} -\log c & \sigma = c\rho , \\ +\infty & \text{otherwise,} \end{array}\right. .
    \label{eq:funny-div}
\end{equation}
This observation is related to the inequality in \eqref{eq:Big-gamma-C-gamma-relation}. Indeed, when performing the optimization for $\boldsymbol{\Gamma}$, consider that choosing $\sigma_{AB}$ to be any operator other than $c \rho_{AB}$ leads to a value of $+\infty$. This then forces $\sigma_{AB}$ to be equal to $c \rho_{AB}$ for some $c> 0$, and the objective function for $\boldsymbol{\Gamma}(\rho_{AB})$ reduces to
\begin{align}
\inf_{c > 0, \gamma(c \rho_{AB})\leq 1}\boldsymbol{D}(\rho_{AB}\|c \rho_{AB}) & = \inf_{c > 0, c \gamma( \rho_{AB})\leq 1 }-\log_2 c \notag \\
& = \inf_{c > 0, c \leq \frac{1}{\gamma( \rho_{AB})} }
-\log_2 c \notag \\
& = \log_2 \gamma( \rho_{AB}),
\end{align}
following from the scaling property of $\boldsymbol{D}$ and the scale invariance of $\gamma( \rho_{AB})$ (Proposition~\ref{prop:scale-invariance}). 
Since the divergence in \eqref{eq:funny-div} happens to be lower semicontinuous in its second argument, it suffices to prove the claim for~$\boldsymbol{\Gamma}$.

Thus, let $\omega_{AB}$ be a state such that $\boldsymbol{\Gamma}(\omega_{AB})=0$. Due to~\eqref{eq:Gamma-quantity-gen-div-def}, for all $\delta>0$ we can find $\sigma_{AB}\geq 0$ such that $\Tr[\sigma_{AB}] \leq \gamma(\sigma_{AB})\leq 1$ and
\begin{equation}
\delta > \boldsymbol{D}(\omega_{AB}\|\sigma_{AB}) \geq \boldsymbol{D}(1\|\Tr [\sigma_{AB}]) = -\log_2 \Tr[\sigma_{AB}]\, ,
\end{equation}
where we also used the data processing inequality and the scaling property for $\boldsymbol{D}$. Combining these two inequalities yields
\begin{equation}
\gamma(\sigma_{AB}) \leq 1 < 2^\delta \Tr \sigma_{AB}\, .
\end{equation}

By definition of $\gamma$, we now find a Hermitian operator $V_{AB}$ and positive semi-definite $K_A$ and $L_B$ such that $T_B\!\left(V_{AB} \pm \sigma_{AB}\right)\geq 0$, $K_A\otimes L_B\pm V_{AB}\geq 0$, and $\Tr\! \left[K_A \otimes L_B\right]\leq 1$. This implies that
\begin{align}
    1 &\geq \Tr[K_A\otimes L_B] \\
    &\geq \Tr [V_{AB}] \\
    &= \Tr [T_B(V_{AB})] \\
    &\geq \Tr [T_B(\sigma_{AB})] \\
    &\geq 2^{-\delta},
\end{align}
and thus in turn that
\begin{align}
    &\left\|K_A\otimes L_B - \sigma_{AB}\right\|_1 \nonumber\\
    & \leq \left\|K_A\otimes L_B - V_{AB} \right\|_1 + \left\|V_{AB} - \sigma_{AB}\right\|_1 \\
    & = \Tr[K_A\otimes L_B - V_{AB}] + \left\|V_{AB} - \sigma_{AB}\right\|_1 \\
    & \leq 1 - 2^{-\delta} + \min\{d_A,d_B\} \left\|T_B(V_{AB} - \sigma_{AB})\right\|_1 \\
    & = 1 - 2^{-\delta} + \min\{d_A,d_B\} \Tr [T_B(V_{AB} - \sigma_{AB})] \\
    & \leq \left(1+\min\{d_A,d_B\}\right) (1-2^{-\delta}) .
\end{align}
In the above calculation, we used the fact that $\|X_{AB}\|_1\leq \min\{d_A,d_B\} \left\|T_B(X_{AB})\right\|_1$ for every Hermitian operator~$X_{AB}$ (see~\cite[proof of Proposition~7]{negativity}), as can be seen immediately by writing a spectral decomposition for $T_B(X_{AB})$ and leveraging the fact that $\left\|T_B(\psi_{AB})\right\|_1\leq \min\{d_A,d_B\}$ for all pure states $\psi_{AB}$~\cite[Proposition~8]{negativity}.

Since $\delta>0$ is arbitrary, we have shown that we can construct a sequence of subnormalized states $\sigma_{AB}(n)$ with the property that (a)~$\lim_{n\to\infty} \boldsymbol{D}\left(\omega_{AB} \| \sigma_{AB}(n) \right) = 0$ and (b)~$\lim_{n\to\infty} \inf_{K_A,L_B} \left\|\sigma_{AB}(n) - K_A\otimes L_B\right\|_1 =0$. Since the set of subnormalized states is compact, up to extracting a subsequence we can assume that $\lim_{n\to\infty} \sigma_{AB}(n) = \sigma_{AB}$; due to~(b) and to the closedness of the set of tensor product operators, we have that $\sigma_{AB} = \rho_A\otimes \tau_B$ is itself a product subnormalized state. But then~(a) together with the lower semicontinuity of $\boldsymbol{D}$ imply that
\begin{align}
    0 &= \lim_{n\to\infty} \boldsymbol{D}(\omega_{AB} \| \sigma_{AB}(n) ) \\
    &\geq \boldsymbol{D}(\omega_{AB} \| \sigma_{AB} ) \\
    &= \boldsymbol{D}(\omega_{AB} \| \rho_A\otimes \tau_B ) .
\end{align}
By faithfulness of $\boldsymbol{D}$, this is only possible if $\omega_{AB} = \rho_A\otimes \tau_B$, which concludes the proof.

\section{Proof of Proposition~\ref{prop:dim-bound-state-measure} (Dimension
Bound)}

\label{app:dim-bnd}The first inequality in~\eqref{eq:dim-bnd} follows from
recalling~\eqref{eq:Big-gamma-C-gamma-relation}. So we prove the second
inequality in~\eqref{eq:dim-bnd}. Let us set $K_{A}=\rho_{A}$, $L_{B}=I_{B}$,
and $V_{AB}=\rho_{A}\otimes I_{B}$. For these choices, we have that%
\begin{equation}
\operatorname{Tr}[K_{A}\otimes L_{B}]=d_{B}.
\end{equation}
We now need to argue that these choices are feasible, i.e., that they satisfy%
\begin{align}
T_{B}(V_{AB}\pm\rho_{AB})  &  \geq0,\label{eq:local-dim-proof-1}\\
K_{A}\otimes L_{B}\pm V_{AB}  &  \geq0. \label{eq:local-dim-proof-2}%
\end{align}
The second set of inequalities is trivially satisfied. For the first, consider
that we need to show that%
\begin{align}
T_{B}(V_{AB}+\rho_{AB})  &  =\rho_{A}\otimes I_{B}+T_{B}(\rho_{AB})\\
&  =\mathcal{P}_{B}^{+}(\rho_{AB})\\
&  \geq0,\\
T_{B}(V_{AB}-\rho_{AB})  &  =\rho_{A}\otimes I_{B}-T_{B}(\rho_{AB})\\
&  =\mathcal{P}_{B}^{-}(\rho_{AB})\\
&  \geq0,
\end{align}
where the linear maps $\mathcal{P}_{B}^{+}$ and $\mathcal{P}_{B}^{-}$ are
defined as%
\begin{align}
\mathcal{P}_{B}^{+}(\cdot)  &  \coloneqq\operatorname{Tr}_{B}[\cdot
]I_{B}+T_{B}(\cdot),\\
\mathcal{P}_{B}^{-}(\cdot)  &  \coloneqq\operatorname{Tr}_{B}[\cdot
]I_{B}-T_{B}(\cdot).
\end{align}
The Choi operators of these maps are given by%
\begin{align}
\mathcal{P}_{B}^{+}(\Gamma_{RB})  &  =I_{RB}+F_{RB}=2\Pi_{RB}^{\mathcal{S}},\\
\mathcal{P}_{B}^{-}(\Gamma_{RB})  &  =I_{RB}-F_{RB}=2\Pi_{RB}^{\mathcal{A}},
\end{align}
where $F_{RB}$ is the unitary swap operator and $\Pi_{RB}^{\mathcal{S}}$ and
$\Pi_{RB}^{\mathcal{A}}$ are the projections onto the symmetric and
antisymmetric subspaces, respectively. Since these operators are positive
semi-definite and they are the Choi operators of $\mathcal{P}_{B}^{+}$ and $\mathcal{P}%
_{B}^{-}$, it follows that $\mathcal{P}_{B}^{+}$
and $\mathcal{P}_{B}^{-}$ are completely positive maps. Thus, the constraints
in~\eqref{eq:local-dim-proof-1} hold, and we conclude the upper bound
$C_{\gamma}(A;B)_{\rho}\leq\log_{2}d_{B}$.

The proof of the other upper bound is similar, and we show it for
completeness. Pick $K_{A}=I_{A}$, $L_{B}=\rho_{B}$, and $V_{AB}=I_{A}%
\otimes\rho_{B}$. For these choices, we have that%
\begin{equation}
\operatorname{Tr}[K_{A}\otimes L_{B}]=d_{A}.
\end{equation}
We need to argue that these choices are feasible. The constraint in
\eqref{eq:local-dim-proof-2} holds trivially. The first constraints in
\eqref{eq:local-dim-proof-1}\ become%
\begin{align}
T_{B}(V_{AB}+\rho_{AB})  &  =I_{A}\otimes T_{B}(\rho_{B})+T_{B}(\rho_{AB}%
)\geq0,\\
T_{B}(V_{AB}-\rho_{AB})  &  =I_{A}\otimes T_{B}(\rho_{B})-T_{B}(\rho_{AB}%
)\geq0.
\end{align}
The inequalities above are equivalent to the following inequalities because
they are related by taking a full transpose:%
\begin{align}
I_{A}\otimes\rho_{B}+T_{A}(\rho_{AB})  &  \geq0,\\
I_{A}\otimes\rho_{B}-T_{A}(\rho_{AB})  &  \geq0.
\end{align}
Then we conclude that the constraints are satisfied because $\mathcal{P}%
_{A}^{+}(\rho_{AB})=I_{A}\otimes\rho_{B}+T_{A}(\rho_{AB})$ and $\mathcal{P}%
_{A}^{-}(\rho_{AB})=I_{A}\otimes\rho_{B}-T_{A}(\rho_{AB})$, and we already
proved that $\mathcal{P}_{A}^{+}$ and $\mathcal{P}_{A}^{-}$ are completely positive.

\section{Proof of Proposition~\ref{prop:classical-comm-bnd} (Classical
Communication Bound)}

\label{app:classical-comm-bnd}

Let $K_{A}$, $L_{BX}$, and $V_{ABX}$ be arbitrary operators for the
optimization problem for $C_{\gamma}(A;BX)_{\rho}$, which satisfy
\begin{align}
T_{BX}(V_{ABX}\pm\rho_{ABX})  &  \geq0,\\
K_{A}\otimes L_{BX}\pm V_{ABX}  &  \geq0.
\end{align}
Pick%
\begin{align}
K_{AX}^{\prime}  &  \coloneqq K_{A}\otimes I_{X},\\
L_{B}^{\prime}  &  \coloneqq\operatorname{Tr}_{X}[L_{BX}],\\
V_{ABX}^{\prime}  &  \coloneqq\overline{\Delta}_{X}(V_{ABX}),
\end{align}
where $\overline{\Delta}_{X}(\cdot)\coloneqq \sum_{x}|x\rangle\!\langle
x|(\cdot)|x\rangle\!\langle x|$ is the completely dephasing channel. Then we
find that%
\begin{align}
\operatorname{Tr}[K_{AX}^{\prime}\otimes L_{B}^{\prime}]  &
=\operatorname{Tr}[K_{A}\otimes I_{X}\otimes\operatorname{Tr}_{X}%
[L_{BX}]]\label{eq:imp-trace-eq-cl-comm-1}\\
&  =\operatorname{Tr}[K_{A}\otimes L_{BX}]\operatorname{Tr}[I_{X}]\\
&  =\operatorname{Tr}[K_{A}\otimes L_{BX}]d_{X}.
\label{eq:imp-trace-eq-cl-comm-3}%
\end{align}
We then need to show that%
\begin{align}
T_{B}(V_{ABX}^{\prime}\pm\rho_{ABX})  &  \geq0,\\
K_{AX}^{\prime}\otimes L_{B}^{\prime}\pm V_{ABX}^{\prime}  &  \geq0.
\end{align}
Since $\overline{\Delta}_{X}$ is a completely positive map, we see that
\begin{align}
&\hspace{10ex} K_{A}\otimes L_{BX}\pm V_{ABX} \geq0\\
&\,\Longrightarrow\qquad K_{A}\otimes\overline{\Delta}_{X}(L_{BX})\pm\overline
{\Delta}_{X}(V_{ABX})  \geq0\\
&\Longleftrightarrow\hspace{9.6ex} K_{A}\otimes\overline{\Delta}_{X}(L_{BX})\pm
V_{ABX}^{\prime}  \geq0.
\end{align}
We also have that $\overline{\Delta}_{X}(L_{BX})\leq L_{B}\otimes I_{X}$,
which follows because%
\begin{align}
\overline{\Delta}_{X}(L_{BX})  &  =\sum_{x}L_{B}^{x}\otimes|x\rangle\!\langle
x|_{X}\\
&  \leq\sum_{x}L_{B}^{x}\otimes I_{X}\\
&  =L_{B}\otimes I_{X},
\end{align}
where%
\begin{equation}
L_{B}^{x}\coloneqq {\vphantom{)}}_X\langle x|L_{BX}|x{\rangle}_{X}.
\end{equation}
The inequality $\overline{\Delta}_{X}(L_{BX})\leq L_{B}\otimes I_{X}$ implies
that%
\begin{align}
K_{A}\otimes L_{B}\otimes I_{X}\pm V_{ABX}^{\prime}  &  \geq0\\
\Longleftrightarrow\qquad K_{AX}^{\prime}\otimes L_{B}^{\prime}\pm V_{ABX}%
^{\prime}  &  \geq0.
\end{align}
Now consider that%
\begin{align}
T_{BX}(V_{ABX}\pm\rho_{ABX})  &  \geq0\\
\Longleftrightarrow\hspace{8.2ex}\overline{\Delta}_{X}(T_{BX}(V_{ABX}\pm\rho_{ABX}))  &
\geq0\\
\Longleftrightarrow\qquad T_{B}(\overline{\Delta}_{X}(V_{ABX})\pm\overline{\Delta
}_{X}(\rho_{ABX}))  &  \geq0\\
\Longleftrightarrow\hspace{15ex} T_{B}(V_{ABX}^{\prime}\pm\rho_{ABX})  &  \geq0.
\end{align}
In the above, we applied the equality $\overline{\Delta}_{X}\circ
T_{X}=\overline{\Delta}_{X}$. Since $K_{AX}^{\prime}$, $L_{B}^{\prime}$, and
$V_{ABX}^{\prime}$ are specific choices that satisfy the constraints for
$C_{\gamma}(AX;B)_{\rho}$, we conclude the desired inequality after minimizing
over all such operators and noticing that the objective function for
$C_{\gamma}(AX;B)_{\rho}$ is $\operatorname{Tr}[K_{AX}^{\prime}\otimes
L_{B}^{\prime}]$, which satisfies
\eqref{eq:imp-trace-eq-cl-comm-1}--\eqref{eq:imp-trace-eq-cl-comm-3}. This
establishes~\eqref{eq:c_gam-comm-bnd}.

To see~\eqref{eq:big_gam-comm-bnd}, consider that%
\begin{align}
&  \log_{2}d_{X}+\boldsymbol{\Gamma}(A;BX)_{\rho}\nonumber\\
&  =\log_{2}d_{X}+\inf_{\substack{\sigma_{ABX}\geq0,\\\gamma(A;BX)_{\sigma
}\leq1}}\boldsymbol{D}(\rho_{ABX}\Vert\sigma_{ABX})\\
&  =\inf_{\substack{\sigma_{ABX}\geq0,\\\gamma(A;BX)_{\sigma}\leq
1}}\boldsymbol{D}(\rho_{ABX}\Vert\sigma_{ABX}/d_{X})\\
&  =\inf_{\substack{\sigma_{ABX}^{\prime}\geq0,\\ \gamma(A;BX)_{d_X \sigma^{\prime
}}\leq1}} \boldsymbol{D}(\rho_{ABX}\Vert\sigma_{ABX}^{\prime})\\
&  =\inf_{\substack{\sigma_{ABX}^{\prime}\geq0,\\d_{X}\,\gamma(A;BX)_{\sigma
^{\prime}}\leq1}}\boldsymbol{D}(\rho_{ABX}\Vert\sigma_{ABX}^{\prime})\\
&  \geq \inf_{\substack{\sigma_{ABX}^{\prime}\geq0,\\\gamma(AX;B)_{\sigma^{\prime
}}\leq1}}\boldsymbol{D}(\rho_{ABX}\Vert\sigma_{ABX}^{\prime})\\
&  =\boldsymbol{\Gamma}(AX;B)_{\rho}.
\end{align}
The fourth equality follows from Proposition~\ref{prop:scale-invariance}, and
the subsequent inequality from~\eqref{eq:c_gam-comm-bnd}.

\section{Proof of Proposition~\ref{prop:subadd-state-meas}\ (Subadditivity)}

\label{app:subadd-state-meas}Let $K_{A_{1}}^{1}$, $L_{B_{1}}^{1}$, and
$V_{A_{1}B_{1}}^{1}$ be feasible choices for $C_{\gamma}(A_{1};B_{1})_{\sigma
}$ (satisfying the constraints in~\eqref{eq:basic-gamma-measure}), and let
$K_{A_{2}}^{2}$, $L_{B_{2}}^{2}$, and $V_{A_{2}B_{2}}^{2}$ be feasible choices
for $C_{\gamma}(A_{2};B_{2})_{\tau}$. Then it follows that $K_{A_{1}}%
^{1}\otimes K_{A_{2}}^{2}$, $L_{B_{1}}^{1}\otimes L_{B_{2}}^{2}$, and
$V_{A_{1}B_{1}}^{1}\otimes V_{A_{2}B_{2}}^{2}$ are feasible choices
$C_{\gamma}(A_{1}A_{2};B_{1}B_{2})_{\rho}$. This latter statement is a
consequence of the general fact that if $A$, $B$, $C$, and $D$ are Hermitian
operators satisfying $A\pm B\geq0$ and $C\pm D\geq0$, then $A\otimes C\pm
B\otimes D\geq0$. To see this, consider that the original four operator
inequalities imply the four operator inequalities $\left(  A\pm B\right)
\otimes\left(  C\pm D\right)  \geq0$, and then summing these four different
operator inequalities in various ways leads to $A\otimes C\pm B\otimes D\geq0$.

To see the inequality in~\eqref{eq:subadd-big-gamma}, let $\omega_{A_{1}B_{1}%
}\geq0$ satisfy $\gamma(A_{1};B_{1})_{\omega}\leq1$ and let $\theta
_{A_{2}B_{2}}\geq0$ satisfy $\gamma(A_{2};B_{2})_{\theta}\leq1$. Then, by
applying~\eqref{eq:subadd-c-gamma}, we have that $\gamma(A_{1}A_{2};B_{1}%
B_{2})_{\omega \otimes \theta}\leq\gamma(A_{1};B_{1})_{\omega}\cdot\gamma(A_{2};B_{2})_{\theta
}\leq1$, so that%
\begin{align}
&  \boldsymbol{\Gamma}(A_{1}A_{2};B_{1}B_{2})_{\rho}\\
&  =\inf_{\substack{\zeta_{A_{1}A_{2}B_{1}B_{2}}\geq0,\\\gamma(\zeta
_{A_{1}A_{2}B_{1}B_{2}})\leq1}}\boldsymbol{D}(\rho_{A_{1}A_{2}B_{1}B_{2}}%
\Vert\zeta_{A_{1}A_{2}B_{1}B_{2}})\\
&  =\inf_{\substack{\zeta_{A_{1}A_{2}B_{1}B_{2}}\geq0,\\\gamma(\zeta
_{A_{1}A_{2}B_{1}B_{2}})\leq1}}\boldsymbol{D}(\sigma_{A_{1}B_{1}}\otimes
\tau_{A_{2}B_{2}}\Vert\zeta_{A_{1}A_{2}B_{1}B_{2}})\\
&  \leq\boldsymbol{D}(\sigma_{A_{1}B_{1}}\otimes\tau_{A_{2}B_{2}}\Vert
\omega_{A_{1}B_{1}}\otimes\theta_{A_{2}B_{2}})\\
&  \leq\boldsymbol{D}(\sigma_{A_{1}B_{1}}\Vert\omega_{A_{1}B_{1}%
})+\boldsymbol{D}(\tau_{A_{2}B_{2}}\Vert\theta_{A_{2}B_{2}}).
\end{align}
The last inequality follows from the hypothesis that $\boldsymbol{D}$ is
subadditive. Since the inequality holds for all $\omega_{A_{1}B_{1}}$ and
$\theta_{A_{2}B_{2}}$ satisfying the constraints for $\boldsymbol{\Gamma
}(A_{1};B_{1})_{\sigma}$ and $\boldsymbol{\Gamma}(A_{2};B_{2})_{\tau}$,
respectively, we conclude the desired inequality in~\eqref{eq:subadd-big-gamma}.

\section{Proof of Lemma~\ref{lem:fid-bnd-free-ops}}

\label{app:pf-fid-bnd-free-ops}Let $\sigma_{AB}$ satisfy $\sigma_{AB}\geq0$
and let $K_{A}$, $L_{B}$, and $V_{AB}$ be Hermitian operators satisfying the
constraints for $\gamma(\sigma_{AB})$. Consider that%
\begin{equation}
F\Big(\overline{\Phi}_{AB}^{d},\sigma_{AB}\Big)\leq F\Big(\overline{\Phi}_{AB}^{d},\left(
\overline{\Delta}_{A}\otimes\overline{\Delta}_{B}\right)  \left(  \sigma
_{AB}\right)  \Big),
\end{equation}
where $\overline{\Delta}$ is a local dephasing channel, defined as%
\begin{equation}
\overline{\Delta}(\cdot)\coloneqq\sum_{m=0}^{d-1}|m\rangle\!\langle
m|(\cdot)|m\rangle\!\langle m|.
\end{equation}
We then find that
\begin{align}
&  \left(  \overline{\Delta}_{A}\otimes\overline{\Delta}_{B}\right)  \left(
\sigma_{AB}\right) \nonumber\\
&  =\sum_{\ell,m=0}^{d-1} |\ell m\rangle\!\langle \ell m|_{AB}\sigma_{AB}|\ell m\rangle\!\langle \ell m|_{AB} \\
&  =\sum_{\ell,m=0}^{d-1}\operatorname{Tr}\left[|\ell\rangle\!\langle\ell
|_{A}\otimes|m\rangle\!\langle m|_{B}\,\sigma_{AB}\right] |\ell\rangle\!\langle
\ell|_{A}\otimes|m\rangle\!\langle m|_{B}.
\end{align}
Then the fidelity $F\Big(\overline{\Phi}_{AB}^{d},\left(  \overline{\Delta}%
_{A}\otimes\overline{\Delta}_{B}\right)  \left(  \sigma_{AB}\right)  \Big)$
becomes the classical fidelity, i.e.,
\begin{multline}
F\Big(\overline{\Phi}_{AB}^{d},\left(  \overline{\Delta}_{A}\otimes\overline
{\Delta}_{B}\right)  \left(  \sigma_{AB}\right)  \Big)\\
=\left[  \sum_{m=0}^{d-1}\sqrt{\frac{1}{d}\operatorname{Tr}[|m\rangle\!\langle
m|\otimes|m\rangle\!\langle m|\sigma_{AB}]}\right]  ^{2}.
\end{multline}
Then we use the fact that%
\begin{align}
&  \operatorname{Tr}[|m\rangle\!\langle m|_{A}\otimes|m\rangle\!\langle
m|_{B}\sigma_{AB}]\nonumber\\
&  =\operatorname{Tr}[|m\rangle\!\langle m|_{A}\otimes|m\rangle\!\langle
m|_{B}T_{B}(\sigma_{AB})]\\
&  \leq\operatorname{Tr}[|m\rangle\!\langle m|_{A}\otimes|m\rangle\!\langle
m|_{B}T_{B}(V_{AB})]\\
&  =\operatorname{Tr}[|m\rangle\!\langle m|_{A}\otimes|m\rangle\!\langle
m|_{B}V_{AB}]\\
&  \leq\operatorname{Tr}[|m\rangle\!\langle m|_{A}\otimes|m\rangle\!\langle
m|_{B}K_{A}\otimes L_{B}]\\
&  =\langle m|K_{A}|m\rangle\!\langle m|L_{B}|m\rangle.
\end{align}
so that the classical fidelity is less than%
\begin{align}
&  \left[  \sum_{m}\sqrt{\frac{1}{d}\langle m|K_{A}|m\rangle\!\langle
m|L_{B}|m\rangle}\right]  ^{2}\nonumber\\
&  \leq\frac{1}{d}\sum_{m}\langle m|K_{A}|m\rangle\sum_{m}\langle
m|L_{B}|m\rangle\\
&  =\frac{1}{d}\operatorname{Tr}[K_{A}]\operatorname{Tr}[L_{B}]\\
&  =\frac{1}{d}\operatorname{Tr}[K_{A}\otimes L_{B}].
\end{align}
We applied the Cauchy--Schwarz inequality. Now 
leveraging the assumption that $\gamma(\sigma_{AB})\leq1$ and taking an infimum over all $K_{A}$, $L_{B}$,
and $V_{AB}$ satisfying the constraints for $\gamma(\sigma_{AB})$, we conclude
that%
\begin{equation}
F\Big(\overline{\Phi}_{AB}^{d},\sigma_{AB}\Big)\leq\frac{1}{d}.
\end{equation}
We conclude the statement of the lemma because we have proven that this
inequality holds for all $\sigma_{AB}$ satisfying $\sigma_{AB}\geq0$ and
$\gamma(\sigma_{AB})\leq1$.

\section{Proof of Proposition~\ref{prop:key-dim-bnd-converse}}

\label{app:pseudo-cont-bnd}Recall the following inequality from~\cite[Lemma~1]%
{Wang2019states}:%
\begin{equation}
\widetilde{D}_{\alpha}(\rho_{0}\Vert\sigma)-\widetilde{D}_{\beta}(\rho
_{1}\Vert\sigma)\geq\frac{\alpha}{\alpha-1}\log_{2}F(\rho_{0},\rho_{1}),
\end{equation}
where $\rho_{0}$ and $\rho_{1}$ are states, $\sigma$ is a positive
semi-definite operator, $\alpha>1$, and $\beta$ satisfies $\frac{1}{2\alpha
}+\frac{1}{2\beta}=1$, so that $\beta\in(1/2,1)$. By the same argument given
in~\cite[Lemma~14]{EW22}, this implies that%
\begin{align}
\widetilde{\Gamma}_{\alpha}(\omega_{AB})-\widetilde{\Gamma}_{\beta}%
\Big(\overline{\Phi}_{AB}^{d}\Big) &  \geq\frac{\alpha}{\alpha-1}\log_{2}%
F\Big(\overline{\Phi}_{AB}^{d},\omega_{AB}\Big)\\
&  \geq\frac{\alpha}{\alpha-1}\log_{2}(1-\varepsilon).
\end{align}
We can rewrite this as%
\begin{align}
\widetilde{\Gamma}_{\alpha}(\omega_{AB})+\frac{\alpha}{\alpha-1}\log
_{2}\left(  \frac{1}{1-\varepsilon}\right)   &  \geq\widetilde{\Gamma}_{\beta
}\Big(\overline{\Phi}_{AB}^{d}\Big)\\
&  \geq\widetilde{\Gamma}_{1/2}\Big(\overline{\Phi}_{AB}^{d}\Big)\\
&  \geq\log_{2}d,
\end{align}
where the 
second inequality follows from $\beta$-monotonicity of the sandwiched
Renyi relative entropy (see~\cite[Prop.~4.29]{KW20book}), and the 
third from Lemma~\ref{lem:fid-bnd-free-ops}.

\section{Relation to $\beta$- and $\boldsymbol{\Upsilon}$-Measures of~\cite{WXD18,WFT19,Fang2019a}}

\label{app:beta-measures}

In this appendix, we discuss the relation of the $\gamma$, $C_\gamma$, and~$\boldsymbol{\Gamma}$ measures introduced in the main text, with the $\beta$, $C_\beta$, and~$\boldsymbol{\Upsilon}$ measures from~\cite{WXD18,WFT19,Fang2019a}. 

Indeed, by fixing the operator $K_{A}$ in~\eqref{eq:basic-gamma-measure} to be the marginal operator $\sigma_{A}$ (where $\sigma_{AB}$ is a bipartite positive semi-definite operator), we
arrive at the following quantity:%
\begin{equation}
\beta(\sigma_{AB})\coloneqq\inf_{L_{B},V_{AB}\in\text{Herm}}\left\{
\begin{array}
[c]{c}%
\operatorname{Tr}[\sigma_{A}\otimes L_{B}]:\\
T_{B}(V_{AB}\pm\sigma_{AB})\geq0,\\
\sigma_{A}\otimes L_{B}\pm V_{AB}\geq0
\end{array}
\right\}  .
\end{equation}
This is precisely the static version of the $\beta$ measure from~\cite[Eq.~(45)]{WXD18}. We also define
\begin{equation}
C_{\beta}(\sigma_{AB})    \coloneqq C_{\beta}(A;B)_{\sigma}\coloneqq\log
_{2}\beta(\sigma_{AB}).    
\end{equation}
For a bipartite state $\rho_{AB}$, we then define
\begin{equation}
\boldsymbol{\Upsilon}(\rho_{AB})   \coloneqq\inf_{\substack{\sigma_{AB}%
\geq0:\\\beta(\sigma_{AB})\leq1}}\boldsymbol{D}(\rho_{AB}\Vert\sigma_{AB}),  
\end{equation}
which is the static version of the $\boldsymbol{\Upsilon}$ measure from~\cite[Eq.~(49)]{WFT19}. 
The following inequalities clearly hold, by applying definitions:
\begin{equation}
\gamma(\sigma_{AB})\leq\beta(\sigma_{AB}),\qquad\boldsymbol{\Gamma}(\rho
_{AB})\leq\boldsymbol{\Upsilon}(\rho_{AB}).
\end{equation}

Thus, $\beta$, $C_{\beta}$, and $\boldsymbol{\Upsilon}$ can be
understood as static counterparts of the corresponding measures of classical
correlations of a channel, from~\cite{WXD18,WFT19,Fang2019a}. We
wrote them down explicitly above to make the connection with prior literature.
However, in the main text, we focused exclusively on $\gamma$, $C_{\gamma}$, and
$\boldsymbol{\Gamma}$ because these quantities lead to tighter upper bounds on
the distillable randomness. The semi-definite restrictions of
these quantities in Section~\ref{sec:SDP-relax} are closely related as well with $\beta$, $C_{\beta}$, and $\boldsymbol{\Upsilon}$, and as discussed there, they lead to computationally efficient upper bounds on distillable randomness.

We finally note that the static $\beta$- and $\boldsymbol{\Upsilon}$-measures for a bipartite state and the dynamic ones for point-to-point channels are further generalized by the corresponding measures for a bipartite channel, as proposed in~\cite{DKQSWW22}. As such, the $\gamma$, $C_\gamma$, and $\boldsymbol{\Gamma}$ measures can be generalized to bipartite channels, as considered in~\cite{W20unpub}, and will be the subject of a future publication.

\end{document}